\documentclass [11pt] {article}

 \usepackage{fullpage}

\usepackage{graphics}
\usepackage[dvips]{epsfig}

\usepackage{amsmath}
\usepackage{amssymb}
\usepackage{amsfonts}
\usepackage{graphicx}
\usepackage{subfig}

\usepackage{cite}

\usepackage{algorithmic}
\usepackage[linesnumbered,ruled]{algorithm2e}

\usepackage{array}
\usepackage{color}

\begin{document}

\newtheorem{theorem}{Theorem}[section]
\newtheorem{lemma}{Lemma}[section]
\newtheorem{corollary}{Corollary}[section]
\newtheorem{claim}{Claim}[section]
\newtheorem{proposition}{Proposition}[section]
\newtheorem{definition}{Definition}[section]
\newtheorem{fact}{Fact}[section]
\newtheorem{example}{Example}[section]
\newtheorem{remark}{Remark}[section]
\newtheorem{case}{Case}

\newcommand{\cV}{{\cal V}}
\newcommand{\cA}{{\cal A}}

\newcommand{\qed}{\hfill $\square$ \smallbreak}
\newenvironment{proof}{\noindent{\bf Proof:}}{\qed}

\def\thefootnote{\fnsymbol{footnote}}

\title{{\bf Rendezvous of Mobile Deterministic Automata\\ in Graphs}}\date{}

\author{Bibhuti Das\footnotemark[1]
\and
Andrzej Pelc\footnotemark[2]
}

\footnotetext[1]{D\'epartement d'informatique, Universit\'e du Qu\'ebec en Outaouais, Gatineau,
Qu\'ebec J8X 3X7, Canada. {\tt dasbibhuti905@gmail.com}}

\footnotetext[2]{
 D\'epartement d'informatique, Universit\'e du Qu\'ebec en Outaouais, Gatineau,
Qu\'ebec J8X 3X7, Canada. {\tt pelc@uqo.ca}. Partially supported by NSERC discovery grant  2024-03767
and by the Research Chair in Distributed Computing at the
Universit\'e du Qu\'ebec en Outaouais.}

\maketitle

\thispagestyle{empty}

\begin{abstract}
Two mobile agents, modeled as identical deterministic finite automata (DFA) navigating in synchronous rounds in a graph with unlabeled nodes, have to meet at some node. The well-researched task of meeting in a graph is known as {\em rendezvous}.
Agents start at adversarially chosen distinct nodes in possibly different rounds. An instance of the rendezvous problem is the underlying graph, together with the initial nodes $u$ and $v$ of the agents. Such an instance is {\em feasible}, if there exists a DFA (possibly working only for this instance), such that its identical copies starting at nodes $u$ and $v$, with an arbitrary delay, accomplish rendezvous. A DFA is {\em  RV-universal} for a class of instances, if it guarantees rendezvous of its copies starting at the designated nodes with arbitrary delay, for all feasible instances of this class. Our goal is to investigate the existence of RV-universal DFA.

We start by observing that if agents cannot mark nodes in any way then there does not exist a RV-universal DFA even for the class of instances where the underlying graph is a line. Hence we allow the use of identical {\em pebbles} to mark the nodes by the agents. We consider {\em stationary pebbles} that can be dropped by agents at nodes but cannot be picked up, and {\em movable pebbles} that can be dropped by agents and later picked up. We observe that, even in the more powerful scenario of movable pebbles, if agents are equipped with any finite number of pebbles, there is no RV-universal DFA for the class of all instances. Hence we restrict attention to instances where the underlying graph is a tree. Our main contribution are two contrasting results showing that movability of pebbles is a crucial feature. We first prove that for any finite number of stationary pebbles there is no RV-universal DFA for trees, and then we design a RV-universal DFA for trees, where each agent is equipped with a single movable pebble.
\vspace*{0.2cm}

\noindent
{\bf keywords:} anonymous graph, anonymous mobile agent, deterministic finite automaton, pebble, rendezvous
\vspace*{1.5cm}
\end{abstract}

\pagebreak

\section{Introduction}

Two mobile agents, modeled as identical deterministic finite automata (DFA) navigating in synchronous rounds in a graph, have to meet at some node. The task of meeting of agents in a graph is known as {\em rendezvous}. It has been extensively researched, including in papers \cite{CKP1} (SPAA 2012), \cite{PY} (SPAA 2019) and \cite{DP} (SPAA 2026),
due to its fundamental symmetry-breaking character. Even if agents have unbounded memory and computational power, if they are identical, meeting is often impossible, e.g.,  when they start simultaneously in an anonymous oriented ring. Hence, symmetry between the agents is sometimes broken by assuming that they know their initial location in the graph \cite{CCGKM}, or by assigning them different labels \cite{TSZ07}.

In applications, mobile agents may be software agents navigating in a network in order to consult 
a distributed database whose parts are located in nodes of the network. They can also be mobile robots navigating in a network of corridors of a contaminated mine, too dangerous for human access, with the aim of collecting samples of the ground or of the air. In both cases, rendezvous may be needed to exchange data collected so far and plan future actions. Since mobile agents are often small and simple mass-produced devices with limited memory and computational power, the  choice of a DFA to model them, and the assumption that they are identical, seem natural. This is why we make this choice and assumption in the present paper.
 
Nodes of the graph are unlabeled, but ports at each node of degree $d$ are labeled by integers $0,1,\dots, d-1$.  Agents start at adversarially chosen distinct nodes in possibly different rounds. 
In each round, an agent may stay put at the current node or move to an adjacent node by choosing some port.

An instance of the rendezvous problem is the underlying graph, together with two distinct nodes that are initial positions of the agents. Such an instance $(G,u,v)$ is {\em feasible}, if there exists a DFA (possibly working only for this instance), such that its identical copies starting at nodes $u$ and $v$, with an arbitrary delay, accomplish rendezvous. Saying that the agents are identical copies of a DFA we understand that the two agents are modeled by the same DFA.
A simple example of a non-feasible instance is a line of odd length with symmetric port labeling, and agents starting at its extremities. No automaton $A$ (in fact even no Turing machine) may permit rendezvous for this instance if both agents modeled by $A$ start simultaneously. 
In contrast, a line of odd length and non-symmetric port labeling gives a feasible
instance, regardless of the initial positions of the agents. Indeed,  a sufficiently  large automaton could
memorize the line and guarantee rendezvous.

A DFA is {\em  RV-universal} for a class of instances, if it guarantees rendezvous of its copies starting at the designated nodes with arbitrary delay, for all feasible instances of this class. Our goal is to investigate the existence of RV-universal DFA. Such a universal automaton for a large and important class of instances permits us to accomplish the desired task of rendezvous, without knowing in which (feasible) instance of the considered class
the agents are navigating.

\subsection
{The model and the problem}
We consider finite simple undirected connected graphs $G=(V,E)$.
Nodes of the graph are unlabeled, but ports at each node of degree $d$ are labeled by integers $0,1,\dots, d-1$. There is no coherence between port numbers at two extremities of an edge.
Since we are interested in deterministic finite automata navigating in {\em a priori} unknown graphs, the automaton must be able to distinguish all port numbers, in order to transit to a state instructing it to take one of the ports. Hence we assume that all node degrees of considered graphs are bounded by some common constant $\ell$ (otherwise no single DFA could navigate in all such graphs), and refer
to finite simple undirected connected graphs of bounded degree simply as {\em graphs}.
Since a RV-universal DFA is considered for a class of graphs with node degrees bounded by  $\ell$, in the construction of such an automaton we assume the knowledge of $\ell$.

Mobile agents are represented as entities  
whose actions are modeled by the same deterministic finite automaton. Hence agents are undistinguishable.
They navigate in synchronous rounds. They start from two distinct adversarially chosen nodes of the graph, called their {\em bases}, in the initial state $S_0$, possibly in different rounds, also chosen by an adversary. In the beginning, each agent is equipped with $\rho$ pebbles, and all pebbles are identical. (Later on, we explain why agents must use pebbles). We consider two scenarios: pebbles can be either {\em stationary} or {\em movable}.  

In each round, an agent is at some node of the graph, carries some number $a$ of pebbles and is in some state $S$. It gets the following input: the current number $a\geq 0$ of pebbles it carries, the degree of the current node, the port number by which it entered this node, or $-1$, if it stayed put in the current round, and a bit indicating if there is a pebble at the current node. The state $S$ causes the agent to produce an output that has two parts: a {\em pebble action} that can be $drop$, $pick$ or $leave$ and a {\em move action} that can be either $-1$ or a non-negative integer. Seeing a given input in state $S$, the agent transits to some state $S'$. The output instructs the agent what to do in the next round. 

The pebble action $leave$ means that the agent does not  drop or pick up any pebble.
The pebble action $drop$ consists of dropping a pebble at the current node and decreasing the number of currently carried pebbles by 1. It can be executed only if $a>0$ and there is no pebble at the current node; otherwise this action has the same result as $leave$.
The pebble action $pick$ consists of picking the pebble at the current node and increasing the number of currently carried pebbles by 1. It can be executed only
if there is a pebble at the current node, and only in the scenario of movable pebbles; 
otherwise this action has the same result as $leave$.

The move action $-1$ means that the agent stays put in the next round, and the move action $m\geq 0$
means that the agent takes port $m$ in the next round. If there is no port $m$ at the current node, the agent stays put.
The trajectory of the agent is the infinite sequence of above described steps.

We now give a formalization of the above intuitive description.
Let $\{0,1\}$ be the set of two possibilities for the current node $z$ of the agent: $0$ indicates that there is no pebble in $z$ ($z$ is empty), and $1$ indicates that there is a pebble in $z$ ($z$ is full).
Let $PebbleActions=\{pick, drop, leave\}$.
Let $P=\{0,1,\dots,\rho\}$ and $L=\{-1,0,1,\dots,\ell\}$.

The actions of the mobile agent are formalized as a deterministic finite Mealy automaton
${\cal A}=(X,Y,Q,\delta,\lambda,S_0)$. 
$X=L\times L \times \{0,1\} \times P$ is the input alphabet and
$Y=PebbleActions\times L \times P$ is the output alphabet. $Q$ is a finite set of states
with a special state $S_0$ called initial.
$\delta:Q\times X \to Q$ is the state transition function, and $\lambda:Q  \to
Y$ is the output function.

The meaning of the input and output symbols is the following.  At each step of its functioning, the agent is at some node $z$ of the graph, in some state $S$,
and carries some number $m \in P$ of pebbles. 
It sees the degree $d\in L$ of the current node and the integer $i\in L$ which is $-1$ or the entry port number. It also learns the bit $h\in \{0,1\}$ indicating if there is a pebble at the current node.
The input $I=(d,i,h,m)\in X$ gives the automaton information about these facts. 
 Given the state $S$,
the agent outputs the symbol $\lambda(S)\in PebbleActions \times L  \times P$ with the following meaning. The first term indicates the pebble action performed by the agent in the current step. The second term indicates the port number the agent should take or the instruction to stay put if this term is $-1$, or if the given port number is not available. The third term indicates how many pebbles the agent carries in the next step. 

Since pebble actions are linked to the number of carried pebbles, 
we have the following restrictions on the possible values of the output function $\lambda$. 
Let $I=(d,i,h,m)\in X$, and $\lambda(S)=(\alpha, j, \kappa)$, where $\alpha \in PebbleActions$, $d\in L$ and $\kappa\in P$. 
\begin{itemize}
\item
if $\alpha=leave$ then $\kappa=m$;
\item
if $\alpha=drop$ and $h=0$ then $\kappa=m-1$; 
\item
if $\alpha=pick$, $h=1$ and the scenario is that of movable pebbles then $\kappa=m+1$.
\end{itemize}

 Given the input symbol $I$ and being in a current state $S$, the agent makes the changes indicated by the output function (it possibly drops or picks a pebble as indicated, possibly changes the number of carried pebbles as indicated, and makes the indicated move), and transits to state $\delta(S,I)$.
 The agent starts with $\rho$ pebbles in the initial state $S_0$, and its starting node is empty (hence its initial input symbol is $(-1,0,\rho)$).

\subsection{Our contribution}

We start by observing that if agents cannot mark nodes in any way then there does not exist a RV-universal DFA even for the class of instances where the underlying graph is a line. Hence we allow the use of identical {\em pebbles} to mark the nodes by the agents. We consider {\em stationary pebbles} that can be dropped by agents at nodes but cannot be picked up, and {\em movable pebbles} that can be dropped by agents and later picked up. We observe that, even in the more powerful scenario of movable pebbles, if agents are equipped with any finite number of pebbles, there is no RV-universal DFA for the class of all instances. Hence we restrict attention to instances where the underlying graph is a tree. Our main contribution are two contrasting results showing that movability of pebbles is a crucial feature. We first prove that for any finite number of stationary pebbles there is no RV-universal DFA for trees, and then we design a RV-universal DFA for trees, where each agent is equipped with a single movable pebble. Our negative result holds even for the class of instances whose underlying graph is a line.

\subsection
{Related work}

Rendezvous of mobile agents in graphs is an extensively studied topic in the distributed computing literature. Many scenarios have been adopted, concerning the capabilities and the behavior of the agents.
In the case when more than two agents are required to meet, rendezvous is often called gathering.

In most of the papers on rendezvous, it is assumed that nodes of the graph do not have distinct identities, and agents cannot mark nodes. However, scenarios departing from these assumptions were also studied.
Rendezvous was considered in graphs whose nodes are labeled \cite{CCGKM,MP}, or when marking nodes by agents using pebbles is allowed \cite{KKSS}.
For a survey of  randomized rendezvous we refer to the classic book
\cite{alpern02b}. 
Deterministic rendezvous in graphs was surveyed in \cite{Pe2}.
 
Most of the literature on rendezvous considered the synchronous scenario, where
agents move in rounds \cite{CCGKM,CKP,TSZ07}. We follow this assumption in the present paper.
 
 In many graphs, e.g., in the oriented ring, symmetry between agents must be broken because otherwise rendezvous is impossible. Most often symmetry is broken by assigning different labels to agents \cite{DFKP,TSZ07}. Sometimes an even stronger assumption is used:
 in \cite{CCGKM}, the authors considered rendezvous under the assumption that the agents know their location in the graph (then the initial location can serve as a label).
 If agents are anonymous and cannot mark nodes then rendezvous is possible only for restricted classes of instances \cite{CKP}.

In many papers, agents are modeled as Turing machines and their memory is unbounded. Other studies concern 
the minimum amount of memory that agents must have in order to accomplish rendezvous \cite{CKP,FP}. In this case, agents are modeled as state machines and the number of states is a function of the size of graphs in which they operate. Gathering many finite automata with distinct labels on the infinite line was considered in \cite{GP}.
To the best of our knowledge, rendezvous of anonymous deterministic finite  automata on anonymous graphs, that we consider in the present paper, was never studied before.


In many papers, asynchronous gathering and rendezvous was studied in the plane \cite{CFPS,fpsw} and 
in graphs
	\cite{BBDDP,BCGIL,DPV}. In the plane, agents are modeled as moving points, and it is often assumed that they can see the positions of other agents.
	For asynchronous rendezvous in graphs, the optimization criterion is the cost, i.e.,  the total number of edge traversals.
	In \cite{BCGIL}, the authors designed almost optimal algorithms for asynchronous rendezvous in infinite multidimensional grids, assuming that an agent knows its position in the grid. In \cite{BBDDP,DPV} this assumption was replaced by a weaker assumption that agents have distinct identities. 

\section{Preliminary Results}\label{prelim}

In this section we prove two preliminary negative results concerning the existence of RV-universal DFA. In order to prove our first result we need the auxiliary notion of an {\em infinite oriented line}. This is an infinite oriented graph whose all nodes have degree 2, and ports at the extremities of each edge, corresponding to this edge, are 0 and 1. At each node, we say that the direction corresponding to port 1 is {\em right} and the direction corresponding to port 0 is {\em left}.  
Consider any DFA  starting at a node of an infinite oriented line in state $S_0$. Let $t_2>t_1$ be the first two rounds when the automaton is in the same state $S$. Let $\tau=t_2-t_1$. Let $v_i$, for $i=1,2$, be the node at which the automaton is in round $t_i$. There are three cases: $v_2=v_1$, $v_2$ is right of $v_1$, and $v_2$ is left of $v_1$. Since the trajectory of the automaton in each period
$(t_1+\tau j,t_1+1+\tau j,\dots,t_2-1+\tau j)$, for any $j>0$, is a shift of its trajectory in the period $(t_1,t_1+1,\dots,t_2-1)$, it follows that in the first case the set of all visited nodes is bounded, in the second case it is a half-line infinite in the direction right, and in the third case it is
a half-line infinite in the direction left. We will say that the DFA is {\em bounded} in the first case, {\em right-oriented} in the second case, and {\em left-oriented} in the third case. Without marking nodes, a $k$-state right-oriented DFA can visit at most $k$ nodes left of its base, and a $k$-state left-oriented DFA can visit at most $k$ nodes right of its base.

We also define a {\em finite oriented line} of length $x$. This is a graph with $x+1$ nodes, two of which have degree 1,  $x-1$ nodes have degree 2, and ports at the extremities of each edge joining nodes of degree 2, corresponding to this edge, are 0 and 1. In a finite oriented line, there is one edge with ports 0, 0 at its extremities, corresponding to this edge. One of these extremities has degree 1. It is called the {\em beginning} of the finite oriented line. The other  node of degree 1 is called the {\em end} of the line.

Finally,  a {\em line} is a graph with two nodes of degree 1 and the remaining nodes of degree 2.

We are now ready to prove our first negative result.

\begin{proposition}\label{prop1}
If the agents cannot mark the nodes in any way, then there does not exist a RV-universal DFA, even for the class of instances whose underlying graph is a line.
\end{proposition}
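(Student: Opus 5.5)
The plan is to argue by contradiction. Fix an arbitrary DFA $\cal A$ with $k$ states and $\ell\ge 2$, and build a single feasible instance on a line where two copies of $\cal A$, started simultaneously, never meet. The instance consists of two long finite oriented lines glued at their ends into one line. Let $H_1$ have length $L_1$ and $H_2$ have length $L_2$. Both halves are oriented \emph{away} from the gluing point, so that ``right'' in each half points toward its own degree-1 extremity. Take $L_1\neq L_2$, both much larger than $k$, with $L_2-L_1$ a positive multiple of $M=k!$ chosen so that $L_1+L_2$ is odd. Because $L_1\neq L_2$, no nontrivial automorphism of the port-labelled line exists. A DFA built specifically for this instance can therefore store the whole line, determine from its local view which half it is in and where, walk to a fixed node, and wait there. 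Hence the instance $(G,u,v)$ is feasible for every pair of distinct nodes $u,v$.

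The key step is a pumping argument on oriented stretches, using the bounded / right-oriented / left-oriented trichotomy above. In a long oriented segment, a $k$-state DFA is eventually periodic in state with period $\tau\le k$, and it shifts by at most $\tau$ positions per period. Its behaviour in the segment therefore depends only on where it enters, and on the length of the segment modulo a divisor of $M$, provided the segment is longer than some bound $c k$. Place the agents at $u\in H_1$ and $v\in H_2$ at the same distance $d$ from their respective degree-1 extremities, with $k\ll d$ and $d\ll L_1$. Let them start simultaneously. The two halves look identical from their outer ends inward, except for the total length. So I want to prove by induction on rounds that agent 1 at distance $x$ from its end in $H_1$ corresponds to agent 2 at distance $x$ from its end in $H_2$, with the same state. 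The correspondence can only break when an agent crosses the gluing point. At that moment it traverses the other half along a long oriented stretch, and since $L_2\equiv L_1 \pmod M$ its state sequence and exit behaviour are the same as those of its mirror twin, merely shifted in time by a multiple of the period. Hence the two agents stay ``mirror-equivalent'' up to such time shifts.

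The meeting condition is then the final step. A meeting at a node at distance $x$ from the end of $H_1$ and $y$ from the end of $H_2$ requires $x+y=L_1+L_2$. In the case with no time shift, exact mirror symmetry would need $x=y$, which is impossible because $L_1+L_2$ is odd. In the case with time shifts, I expect the main obstacle: after one agent crosses the middle, the time offsets between the twins could in principle allow them to cross each other and meet. I would first try to exclude this through the trichotomy. If $\cal A$ is bounded, neither agent ever leaves a window of radius $k$ around its base. Choosing $d$ far from the middle then gives no meeting and no crossing at all. If $\cal A$ is right-oriented, each agent drifts toward its own outer end and then lives in a bounded-state regime near that end; I would argue it can never drift back across the middle, because leftward motion in an oriented region would contradict right-orientation after at most $k$ steps. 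If $\cal A$ is left-oriented, both agents drift toward the middle. I would handle this by shifting $v$ by one node so that the two agents arrive at the gluing area out of phase, and then apply the parity argument. If this case analysis fails, the fallback is to replace simultaneous start by a carefully chosen delay that is a multiple of $M$, restoring exact mirror alignment.
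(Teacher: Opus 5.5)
There is a genuine gap, exactly where you expected it. Your right-oriented case relies on the claim that a right-oriented automaton, once it has drifted to its outer leaf, can never come back across the middle. This is false. Right-orientation describes only the run from $S_0$ on the infinite oriented line, and the degree-1 input at the leaf can switch the automaton into an entirely different regime.

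For example, take the automaton that uses port 1 until it is at a node of degree 1, then uses port 0 until it first enters a node by port 0, and then waits forever. It is right-oriented. Run it on your instance with simultaneous start (halves sharing the junction node $g$, as your condition $x+y=L_1+L_2$ indicates). The agent from the shorter half $H_1$ walks to its leaf and back. It then parks at $g$ or at the neighbour of $g$ in $H_2$, depending on the port labels at $g$. It gets there no later than the round in which its twin reaches that node (since $L_2-L_1\geq k!\geq 2$), so they meet.

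The problem is structural. You put the asymmetry needed for feasibility into the graph. Once the agents return to the middle, the correspondence ``distance $x$ from one's own leaf'' is not induced by any automorphism. The time shift you describe is exactly what lets the first arrival wait for the second. For the same reason your fallback has no ``exact mirror alignment'' to restore, and the left-oriented case (``shift $v$ by one node'') is not an argument. There is also a concrete slip: $k!\mid L_2-L_1$ with $k\geq 2$ forces $L_1+L_2$ to be even, so your parity condition cannot hold.

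The missing idea is to put the asymmetry into the starting positions rather than into the graph, and to use the delay to synchronize the agents at the leaves. The paper joins two identical copies of a finite oriented line at their beginnings by an edge with ports $1,1$. The resulting line $N$ has odd length and a port-preserving reflection with no fixed node. Let ${\cal A}$ be right-oriented with period $\tau$ and displacement $\Delta$ per period. The bases are placed so that their distances to the junction differ by $\Delta$, with the same offset relative to the periodic regime. The agent whose base is nearer the junction starts $\tau$ rounds earlier. Then both agents reach their leaves in the same round and in the same state. From that moment on they are reflections of each other forever, whatever ${\cal A}$ does afterwards, so their distance stays odd. Feasibility comes from the unequal distances of $u$ and $v$ to their leaves. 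A left-oriented automaton is handled by orienting the halves the other way, so that it again drifts to the leaves. Your bounded case is fine, and your closing parity argument is the right endgame, but it needs a graph that is genuinely symmetric after the synchronization point.
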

\begin{proof}
Consider any $k$-state DFA $\cal A$. We will construct a feasible instance whose underlying graph is a line, for which $\cal A$ does not permit to achieve rendezvous.
Consider the behavior of an agent modeled by  $\cal A$, on the infinite oriented line. First suppose that $\cal A$ is bounded and that all nodes visited by $\cal A$ on the infinite oriented line are at distance at most $b$ from the base of $\cal A$. Let $L$ be the finite oriented line of length $4b+3$, let $u$ be the node at distance $b+1$ from its beginning, and let $v$ be the node at distance $b+1$ from its end. Consider the instance $(L,u,v)$ and agents with bases $u$ and $v$, modeled by  $\cal A$.
The behavior of both these agents in $L$ is identical to their behavior on the infinite oriented line. Thus the agents will be always at distance at least 1, hence they cannot meet. On the other hand, the instance $(L,u,v)$ is feasible because there exists a DFA that visits all nodes of the line $L$ and identifies its end. Thus agents can go to the end of the line and meet there.

Thus we can assume that  $\cal A$ is left-oriented or right-oriented. Without loss of generality, assume that the second case holds. Consider the behavior of $\cal A$ on the infinite oriented line, starting at node $z$ in state $S_0$, and let $t_2>t_1$ be the first two rounds when the automaton is in the same state $S$. Let $\tau=t_2-t_1$. Let $v_i$, for $i=1,2$, be the node at which the automaton is in round $t_i$. Let $\Delta>0$ be the distance between $v_1$ and $v_2$. $z$ is at some distance $\delta \leq k$ from $v_1$.

Let $M$ be the finite oriented line of length $2\Delta +2k+1$. Consider two disjoint copies $M'$ and $M''$ of $M$. Let $w_1$ be the node at distance $2k+1$ from the beginning of $M'$ and let  $u$ be the node at distance $\delta$ from $w_1$, on the same side left/right as $z$ with respect to $v_1$.
Let $w_2$ be the node at distance $2k+1+\Delta$ from the beginning of $M''$ and let  $v$ be the node at distance $\delta$ from $w_2$, on the same side left/right as $z$ with respect to $v_1$. Let $N$ be the line resulting  from joining the beginnings of $M'$ and $M''$ by an edge $e$ with both ports 1 at the extremities, corresponding to $e$. The agent $A(u)$ starts at its base $u$, and the agent $A(v)$ starts at its base $v$. Both agents are modeled by $\cal A$. Agent $A(u)$ starts $\tau$ rounds earlier than agent $A(v)$.

The behavior of both agents in instance $(N,u,v)$ is the same as in the infinite oriented line, before each agent visits a
leaf or traverses $e$. Each agent visits a leaf before traversing $e$ and they cannot meet before visiting a leaf. By the construction of the instance, each agent visits a leaf for the first time in the same round and in the same state. Call this round  $t$.

It follows by induction on round $r \geq t$ that in any such round $r$, both agents are in the same state.
In round $t$, the distance between the agents is odd, as the length $2(2\Delta +2k+1)+1$ of the line $N$ is odd.
In any round $r > t$, the distance between the agents may either remain unchanged, or increase by 2 or decrease by 2, with respect to the previous round. Hence this distance remains odd, in any round $r \geq t$, and thus agents can never meet.

It remains to observe that the instance $(N,u,v)$ is feasible. Indeed, an automaton with sufficiently many states can record the distance of its base from the closer leaf. These distances are $2k+1$ and $2k+1+\Delta$, respectively. The agent corresponding to the smaller distance can go to its closer leaf, and the agent corresponding to the larger distance can go to its farther leaf, where the agents will meet. 
\end{proof}

In view of Proposition \ref{prop1}, in order to get RV-universal automata, we need to allow some marking of nodes. In this paper, marking is done by identical pebbles used by the agents. We consider two types of pebbles: {\em stationary pebbles} that can be dropped by agents at nodes but cannot be picked up, and {\em movable pebbles} that can be dropped by agents and later picked up. Our second negative result says that, even in the more powerful scenario of movable pebbles, there is no RV-universal automaton for the class of all instances.

\begin{proposition}\label{prop2}
There does not exist  a RV-universal DFA for the class of all instances, even if each agent is equipped with an arbitrary finite number of movable pebbles.
\end{proposition}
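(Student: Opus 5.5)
Given a $k$-state DFA $\cal A$ whose agents carry $\rho$ movable pebbles, I will build a feasible instance on which $\cal A$ fails. The idea is that a finite automaton with boundedly many pebbles cannot explore a large graph with few distinguishable features. Yet a graph can contain one small asymmetry that a bigger automaton, tailored to the instance, uses to break symmetry. The natural host is a large cycle-like graph, or more generally a graph with huge girth, not a tree, since the paper later treats trees positively.

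First I fix the symmetric background: a large oriented ring $R_n$, where each node has ports $0$ and $1$ consistently, so port $1$ always leads clockwise. Consider two agents started simultaneously at antipodal or suitably placed nodes. As long as the configuration is rotationally symmetric, with pebbles placed in rotated positions, the two agents are always in the same state, carry the same number of pebbles, and see the same local inputs. They therefore never meet. This explains why the pure ring is infeasible and cannot serve directly. I must add a perturbation that makes the instance feasible, while ensuring $\cal A$ never ``notices'' it.

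Second, I use the boundedness of $\cal A$ with $\rho$ pebbles. The key lemma I plan to prove is that on a sufficiently long oriented ring (length exceeding some $f(k,\rho)$), a $k$-state agent carrying $\rho$ pebbles behaves as follows. Either it stays within a bounded region around its base forever (up to distance $D(k,\rho)$ from any pebble cluster), or it eventually drifts with period structure, so that its joint state together with the relative pebble configuration becomes periodic up to translation. I would prove this by induction on $\rho$, in the style of the bounded, left-oriented and right-oriented trichotomy of Proposition~\ref{prop1}. Between visits to pebbles the agent behaves like a pebble-free DFA. The pebbles it drags along form a configuration of bounded diameter, or else some pebble is abandoned forever and the analysis reduces to $\rho-1$ pebbles.

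Third, I build the instance. I take two copies of a long symmetric segment structure, joined so that the instance has an asymmetry far away, at distance much larger than $D(k,\rho)$ and than any periodic drift. For example, I use a ring of length $n$ with a single chord or pendant gadget placed asymmetrically with respect to the two bases. Alternatively, I use two rings attached by paths of different lengths, mirroring the odd-length trick of Proposition~\ref{prop1}. If $\cal A$ is bounded, the agents never reach the gadget and behave as on the symmetric ring. If $\cal A$ drifts, then because of the orientation both agents drift in the same direction. I choose the start delay and positions so that both reach the gadget in the same state and at the same time, and then stay synchronized with a parity or distance invariant preventing meeting, as in Proposition~\ref{prop1}. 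Feasibility follows because a large automaton can memorize the whole instance, locate the unique gadget, and have both agents go there and wait.

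The main obstacle is the second step: controlling a finite automaton that can repeatedly pick up and move pebbles. Its pebbles could ``ratchet'' along the ring in complicated ways, and I need a clean trichotomy with explicit bounds depending only on $k$ and $\rho$. I would first try to treat the tuple consisting of the state, the number of carried pebbles, and the offsets of the dropped pebbles relative to the agent, truncated when they exceed $k$, as a finite ``extended state''. I would then apply the pigeonhole argument from Proposition~\ref{prop1} to it.
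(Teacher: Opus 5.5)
\textbf{There is a genuine gap: the second-step lemma is false, and the proposed host graphs are too easy.}

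Your key lemma in the second step is false as soon as $\rho\ge 2$, and everything after it depends on it. Take the following behaviour on an oriented line or a long oriented ring. The agent drops one pebble, carries the other one step right and drops it. It then walks left to the first pebble and shifts it one step further left, walks right to the second pebble and shifts it one step further right, and repeats. The agent always knows which pebble it is at, because it knows which direction it walked. No pebble is ever abandoned, yet the gap between the pebbles grows without bound. The trajectory consists of ever-wider sweeps, so it is neither bounded nor periodic up to translation. With three pebbles, the two gaps between consecutive pebbles act as two counters, so the agent can simulate a two-counter (Minsky) machine, and no bounded/drifting trichotomy can describe it. For the same reason, your ``extended state'' with offsets truncated at $k$ is not a state. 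Whether a gap is $k+1$ or $k+1000$ determines when the agent next sees a pebble, hence its whole future, so the pigeonhole argument of Proposition~\ref{prop1} has nothing to work with.

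The deeper problem is the choice of host graphs: instances of the kind you propose are too easy for automata with movable pebbles. The paper's positive result shows that one movable pebble suffices for a RV-universal DFA on all trees, and extra pebbles can simply be carried. So no line- or tree-based construction in the style of Proposition~\ref{prop1} can survive. A ring with a single pendant leaf is solved even without pebbles by the following automaton: take port $0$ at the start, leave every degree-2 node by the port it did not enter, and stop forever as soon as it is at a node of degree $3$. Both agents stop at the unique degree-3 node and meet, whatever the delay. In your drifting case, two agents reaching a one-node gadget at the same time have in fact already met.

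What is missing is a graph that the given automaton cannot even explore. The paper gets it from Rollik's theorem: for any finite team of automata there is a finite cubic graph they cannot collectively explore. Since movable pebbles can be simulated by automata, there is a cubic graph $G$, a base $u$ and a node $v$ that ${\cal A}$ with its $\rho$ pebbles never visits. Take two disjoint copies of $G$ and join the two copies of $v$ by a path of length $2$ whose midpoint receives $n$ extra leaves. This midpoint is the unique node of degree $n+2$, so an instance-specific automaton can map the graph and go there, which makes the instance feasible. Each copy of ${\cal A}$, together with its pebbles, stays confined to its own copy of $G$, never observes the changed degree at the copy of $v$, and never reaches the connecting path. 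Hence the copies started at the two copies of $u$ never meet.
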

\begin{proof}
Consider any DFA $\cal A$ using a finite number $k$ of movable pebbles. We will construct a feasible instance for which $\cal A$ does not permit to achieve rendezvous.
It is proved in \cite{Ro} that for any finite number of DFA there exists a finite cubic graph $G$ that cannot be collectively explored by these automata, i.e., at least one node of $G$ will not be visited by any of these automata. Since movable pebbles can be simulated by automata, it follows that there exists a finite cubic graph $G$ that cannot be explored by $\cal A$ using finitely many movable pebbles.

Consider the $n$-node graph $G$ and its node $u$ which is the base of an agent modeled by $\cal A$. Let $v$ be the node of $G$ that cannot be visited by $\cal A$ using $k$ pebbles. Let $(G_1,u_1)$ and $(G_2,u_2)$ be disjoint isomorphic copies of $G$ with nodes $u_1$, $u_2$ corresponding to the base $u$. Let $v_1$ and $v_2$ be the nodes in each copy, corresponding to the unvisited node $v$ of $G$.
We construct the graph $H$ as follows. Consider the path $P$ of length $2$ with extremities $v_1$ and $v_2$. Add $n$ leaves adjacent to the mid-point of $P$, creating the augmented path $Q$. The graph $H$ is defined as the union of graphs $G_1$, $G_2$ and $Q$.

The DFA  $\cal A$ does not permit us to achieve rendezvous in the instance $(H,u_1,u_2)$ because agents start at nodes $u_1$ and $u_2$, respectively, and they cannot reach the path $P$ joining the two disjoint copies of $G$ from which they start. It remains to show that the instance $(H,u_1,u_2)$  is feasible. Indeed, an automaton with sufficient memory can explore and remember the entire graph $H$. Thus it can identify the only node $z$ of degree $n+2$ in the graph $H$ (which is the mid-point of $P$ together with the added leaves). After this identification, each agent goes to $z$ and stops. 
\end{proof}

Proposition \ref{prop2} implies that in order to get a RV-universal automaton, we must not only allow pebbles, but also restrict the class of instances for which a RV-universal automaton is sought. Such a natural class are trees, and hence, from now on, we study the existence of
RV-universal automata for the class of instances whose underlying graph is a tree.

\section{Nonexistence of RV-Universal Automata with Stationary\\ Pebbles in Trees}

In this section, we prove that there does not exist any RV-Universal DFA even for the class of all  instances whose underlying graph is a finite line, if the agents modeled by the automaton are initially equipped with finitely many stationary pebbles. 

The high-level idea of the proof is as follows. Given any DFA $A$ with $k$ states, we construct a feasible instance whose underlying graph is a line, and for which $A$ does not guarantee rendezvous.
We first consider two copies of a finite oriented line with beginnings $b$ and $b'$, respectively.
Then we exchange port numbers 0 and 1 at some nodes of degree 2 in each copy, and we call these nodes {\em red}.
All other nodes of degree 2 will be called {\em green}, and the end of each copy will be called {\em blue}.  Finally we join the blue nodes by an edge, putting port 1 corresponding to this edge at each blue node. The agents start at nodes $b$ and $b'$ in the same round.
Red nodes are chosen so that:
\begin{itemize}
\item
they are situated in different places in each copy, to guarantee that the instance is feasible;
\item
all red nodes in each copy are farther from the beginning of the copy  than all pebbles possibly dropped by $A$ in this copy. As we will see, this is done to ensure that (stationary) pebbles cannot help to meet.
The red node in each copy closest to $b$ (resp. to $b'$) is called $c$ (resp. $c'$).
\item
the red nodes farthest from the beginning of each copy, called respectively $g$ and $g'$, are at the same distance from $b$ and $b'$, respectively, and have the property that the agents always enter these nodes in the same round and in the same state.
\end{itemize}
We will prove that the only place where agents could possibly meet is the part of the line between nodes $g$ and $g'$. However, the distance between these nodes is odd, and we will show that in each round after reaching $g$ and $g'$ for the first time, the distance between the agents either remains the same as in the previous round, or increases by 2 or decreases by 2. Hence, when the agents are between $g$ and $g'$, this distance is always odd,
and hence rendezvous is impossible. The rest of this section is devoted to showing that $A$ does not guarantee rendezvous in the line constructed as above.

The aim of our first lemma is to restrict the pool of automata that could be potentially RV-universal.
The lemma examines the behavior of an automaton on any infinite line, i.e., any infinite graph all of whose nodes are of degree 2.

\begin{lemma}\label{finitely}
If a DFA visits a node $v$ of an infinite line more than $k$ times then all nodes that it visits are at distance at most $\mu$ from $v$, for some positive integer $\mu$.
\end{lemma}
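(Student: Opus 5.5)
The plan is to use a pigeonhole argument on the states at the visits to $v$, and then a periodicity argument of the same kind as the one used above for the infinite oriented line. Here $k$ is the number of states of the DFA. The key point is that, once the agent is at the same node in the same state with the same surroundings, its future trajectory is a repetition of a finite closed walk.

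First, since $v$ is visited more than $k$ times, there are two rounds $t_1<t_2$ at which the agent is at $v$ in the same state $S$. Between $t_1$ and $t_2$ the agent performs a closed walk $W$ from $v$ to $v$. Let $\mu$ be the maximum distance from $v$ over all nodes visited up to round $t_2$; this includes the finite prefix before $t_1$ and the walk $W$, so $\mu$ is a finite positive integer. I then claim that the trajectory from round $t_2$ on is an exact copy of $W$, repeated forever. This needs the inputs received from $t_2$ to be the same as those received from $t_1$. The port labels of the line are fixed, and the node is the same. Hence, by induction on the rounds of the period, the degrees and entry ports coincide, and the only possible discrepancies are the pebble bit at the current node and the number of carried pebbles. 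If these also agree, determinism gives the same transitions and outputs. The walk then repeats with period $t_2-t_1$, and no node farther than $\mu$ from $v$ is ever visited.

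The main obstacle is exactly the pebbles. If the agent drops pebbles during $W$, the configuration seen in the second period differs, and the repetition argument breaks. I would handle this as follows. Pebbles are stationary and the agent starts with finitely many, $\rho$, so drops occur in at most $\rho$ rounds. I would apply the pigeonhole argument to visits of $v$ during which the number of carried pebbles is constant, and hence no drop occurs between them; the carried count is part of the input, so it must be tracked. Concretely, I would use the pair (state, carried count), or take the pigeonhole over visits after the last drop. If the intended $k$ does not already absorb this factor, I would restate the threshold as $k(\rho+1)$, or as "more than $k$ visits after the last drop". With no drop in the period, the pebble configuration is the same along $W$ in both periods, since no pebble can ever be removed. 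Then the induction above goes through, and $\mu$ is taken as the maximum distance reached up to the second chosen visit.
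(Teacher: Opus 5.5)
Your route is the paper's: pigeonhole on the states at the visits of $v$, periodicity of the closed walk $W$, and $\mu$ bounding the prefix plus one period. You are right that the paper's proof silently ignores the pebbles. Your refinement to pairs (state, carried count), with threshold $k(\rho+1)$, handles them correctly: equal counts at $t_1<t_2$ force no drop in between, hence identical pebble configurations and the same pebble bit at $v$.

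There is, however, a remaining gap in the base of your induction, and the paper's argument has it too. You say the entry ports coincide because the node is the same. From round $t_1+1$ on this is true, but only provided the states at $t_1+1$ and $t_2+1$ agree. These states are $\delta(S,I_{t_1})$ and $\delta(S,I_{t_2})$. The inputs $I_{t_1}$ and $I_{t_2}$ contain the ports by which the agent entered $v$, and these depend on where it came from, so they need not be equal.

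This is a real failure, not a presentational one. Take the infinite oriented line and two states $A=S_0$ (take port $1$) and $B$ (take port $0$), never dropping a pebble. Set $\delta(A,-1)=B$, $\delta(B,0)=B$, $\delta(B,1)=A$ and $\delta(A,0)=\delta(A,1)=A$; the other input components are irrelevant. Starting at $v$, the agent visits $v, v+1, v, v-1, v, v+1, v+2,\dots$ and then goes right forever. It is at $v$ three times, which is more than $k=2$. In rounds $0$ and $4$ it is at $v$ in the same state $A$, but with entry ports $-1$ and $0$, and it leaves to $v+1$ in states $B$ and $A$ respectively. So pigeonholing on the state alone, or on state and count, does not give periodicity. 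With $\rho=0$ this example even violates your threshold $k(\rho+1)$.

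The fix is to pigeonhole on triples (state, entry port, carried count). On a line the entry port lies in $\{-1,0,1\}$, so more than $3k(\rho+1)$ visits of $v$ give $t_1<t_2$ with the same state and the same full input. The degree is $2$ and the count is equal, hence there is no drop in between and the pebble bit is the same. After that your induction goes through verbatim. The lemma is only used, in Lemma~\ref{lemma bound} and through $\beta$, to bound the number of visits by some constant depending on the automaton, so the larger threshold does no harm downstream.
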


\begin{proof}
Consider a DFA $A$ that visits a node $v$ of an infinite line more than $k$ times.  Let $t_2>t_1$ be the first two rounds when the automaton is in the same state $S$ and visits node $v$. Let $\tau=t_2-t_1$. The trajectory of the automaton in each period
$(t_1+\tau i,t_1+1+\tau i,\dots,t_2-1+\tau i)$, for any $i>0$, is identical. Suppose all nodes of this trajectory are at distance at most $\mu'$ from node $v$.
Since  the number of nodes visited  before round $t_1$ is at most $\mu''$, for some positive integer $\mu''$, it follows that all nodes visited by $A$ are at distance at most $\mu=\mu'+\mu''$ from $v$.
\end{proof}

The above lemma permits us to discard all automata that visit some node of an infinite line more than $k$ times. Indeed, agents modeled by such an automaton and starting sufficiently far, e.g., on a finite oriented line, can never meet, as the sets of nodes visited by them are disjoint. Hence, from now on we may assume that, if our given automaton $A$ operates on an infinite line then it visits each node at most $k$ times.

For any node $v$ of an infinite line there are two half-lines determined by $v$.
The next lemma shows that we may restrict attention to automata that, when operating on any infinite line, and given any node $v$ of this line, visit only finitely many nodes of one of the two half-lines determined by $v$.

\begin{lemma}\label{lemma bound}
Consider a DFA $A$ that visits a node $v$ of an infinite line at most $k$ times.
Then there exists a half-line $H$ determined by $v$, and an integer $\beta$, such that $A$ visits at most $\beta$ nodes of $H$. 
\end{lemma}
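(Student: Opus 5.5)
The plan is to exploit the fact that $v$ is a cut node of the infinite line, so the trajectory of $A$ is divided by its visits to $v$ into finitely many excursions. Since $A$ visits $v$ at most $k$ times, the rounds spent at $v$ split the trajectory into at most $k+1$ maximal segments during which the agent stays off $v$. A segment lying strictly between two visits is a bounded interval of rounds. Only the last segment may be infinite, and it exists only if the agent leaves $v$ after its final visit and never returns. The setting is that of Lemma \ref{finitely}: no pebbles (or stationary pebbles dropped before this segment, which only finitely affect the line), and $A$ is a DFA.

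First I handle the finite segments. Each starts and ends at $v$ and has finite length. Every such segment is therefore contained in one of the two half-lines determined by $v$ and visits only finitely many nodes of it. Let $\beta_0$ be the total number of nodes visited during the finitely many finite segments, together with the portion of the trajectory before the first visit to $v$. If the trajectory does not start at $v$, that initial portion is a walk staying on one side, either finite or infinite. If it is infinite, $A$ never reaches $v$ at all and we are in the same situation as the infinite last segment below.

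Next I handle the infinite segment, if it exists. It never returns to $v$, so it lies entirely in one half-line $H'$ determined by $v$ (being connected and avoiding the cut node $v$). Take $H$ to be the other half-line. Then the nodes of $H$ visited by $A$ are only those visited during the finite segments or the initial portion, so $\beta=\beta_0$ works. If no infinite segment exists, the whole trajectory returns to $v$ only finitely often and ends with $A$ sitting forever at $v$. In that case the visited set is finite and either half-line works with $\beta=\beta_0$.

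The main obstacle is being careful about what ``visits $v$ at most $k$ times'' means: counting rounds at $v$ versus arrivals. An agent staying put forever at $v$ would be in $v$ in infinitely many rounds, which the hypothesis excludes under the round interpretation. I would adopt the convention matching Lemma \ref{finitely}, counting rounds, and note that the argument is unchanged under the arrival interpretation. The infinite tail case then always yields one half-line visited only finitely.
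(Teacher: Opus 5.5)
Your proof is correct and takes essentially the same route as the paper. Both split the trajectory at the finitely many visits to $v$, note that the part after the last visit lies in a single half-line $H'$, and bound the visited nodes of the other half-line $H$ by the finitely many earlier excursions. The remaining differences are minor: the paper takes $\beta$ as a maximum of per-excursion counts (valid because each excursion's trace in $H$ is a prefix of $H$) where you take a sum, and you explicitly handle the degenerate cases (never reaching $v$, or staying at $v$ forever) that the paper leaves implicit.
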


\begin{proof}
Consider any node $v$ of an infinite line and suppose that the automaton is in the half-line $H'$ after the last visit of $v$. Let $H$ be the other half-line determined by $v$. Let $\beta_0$ be the number of nodes of $H$ visited by $A$ before the first visit  of $v$, and let $\beta_i$, for $1\leq i \leq k$, be the number of nodes of $H$ visited by $A$ between the $(i-1)$th and the $i$th visit of $v$. Then the number of nodes of $H$ ever visited by $A$ is at most $\beta=\max(\beta_0,\dots,\beta_k)$.
\end{proof}

For any node $v$, we will say that the half-line $H' \neq H$ is the {\em free} half-line determined by $v$.

We will need the definition of an infinite {\em pseudo-oriented line}.
This is either an infinite oriented line, or an infinite oriented line, where at one node, called a {\em special node}, the port numbers 0 and 1 are reversed. If a pseudo-oriented line is simply an oriented line, each of its nodes is considered special.

For  a DFA $A$ located at a special node $v$ of a pseudo-oriented line, the free half-line determined by $v$ may be either the half-line for which, at each node $u\neq v$, the port 0 leads towards $v$, or the half-line for which, at each node $u\neq v$, the port 1 leads towards $v$. In the first case we will call $A$ {\em right-sided}, and in the second case we will call it {\em left-sided}. Without loss of generality we may assume from now on that $A$ is right-sided.

Consider the DFA $A$ located at a special node $u$ of an infinite pseudo-oriented line in some round $t_0$.
Suppose that $A$ is in state $S_i$. Let $t_2>t_1\geq t_0$ be the first two rounds when the automaton is in the same state $S'$. Let $a$ denote the distance between the  locations of the automaton on the line in rounds $t_0$ and $t_1$. Define $l(S_i)$ as the distance between the locations of the automaton on the line in rounds $t_1$ and $t_2$ (cf. Fig. \ref{g1}). Also, define $x=2\cdot l(S_1)\cdot l(S_2)\cdots  l(S_k)$, where $S_1,\dots,S_k$ are the states of $A$. 
\begin{figure}[h]

\centering

\includegraphics[width=0.4\columnwidth]{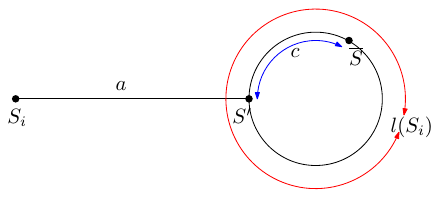}

\caption{\footnotesize State transitions and corresponding trajectories starting from state $S_i$.}
\label{g1}
\end{figure}

Now we are ready to construct a feasible instance for which the $k$-state right-sided automaton $A$ does not guarantee rendezvous. This instance is $(L, b,b')$, for some finite line $L$, whose nodes of degree 1 are $b$ and $b'$. The agents start simultaneously from nodes $b$ and $b'$, respectively. In what follows, we describe the line $L$ (cf. Fig. \ref{g2}).
 
 First, we construct two copies $C$ and $C'$ of a finite oriented line with beginnings $b$ and $b'$, respectively. Since our automaton has only finitely many stationary pebbles at its disposal, we may assume that it does not drop any pebble after some round $t$. Let $e$ and $e'$, respectively, denote the farthest nodes from $b$ and $b'$ in lines $C$ and $C'$, respectively, visited by round $t$.
 
 Let $z$ denote the distance between $b$ (resp. $b'$) and $e$ (resp. $e'$). Next, we 
 color red some nodes of degree 2 of lines $C$ and $C'$. In each red node we exchange the port numbers 0 and 1.
 The closest red nodes $c$ and $c'$ to the beginnings  $b$ and $b'$, in lines $C$ and $C'$, are at distance $z+\beta x$ from $b$ and $b'$, respectively, where $\beta$ is the bound from Lemma \ref{lemma bound}.
The farthest red nodes $g$ and $g'$ from the beginnings $b$ and $b'$, in lines $C$ and $C'$, are at distance $z+6\beta x$ from $b$ and $b'$, respectively. We color red three more nodes between nodes $c$ and $g$ (resp. between nodes $c'$ and $g'$). Let $c_i$ (resp. $c'_i$), for $i=1,2,3$, denote the $i^{th}$ of these red nodes from node $c$ (resp. from node $c'$). 
\begin{itemize}
\item The distance between nodes $b$ and $c_1$ (resp. nodes $b'$ and $c'_1$) is $z+2\beta x$. 
\item The distance between nodes $b$ and $c_2$ is $z+3\beta x$. 
\item The distance between nodes $b'$ and $c_2'$ is $z+4\beta x$. 
\item The distance between nodes $b$ and $c_3$ (resp. nodes $b'$ and $c'_3$) is $z+5\beta x$. 
 \end{itemize}
We say that the line segment between nodes $c$ and $g$ (resp. between nodes $c'$ and $g'$) is the {\it crucial} region of $C$ (resp. $C'$).  Finally, we color blue nodes $f$ and $f'$ at distance $z+6\beta x +1$ from nodes $b$ and $b'$, respectively. All other nodes of degree 2 of lines $C$ and $C'$ are colored green.
This completes the construction of finite oriented lines $C$ and $C'$. Finally, to complete the construction of line $L$, we join blue nodes $f$ and $f'$ by an edge, and put port 1 corresponding to this edge at each blue node. 

\begin{figure}[h]

\centering

\includegraphics[width=0.9\columnwidth]{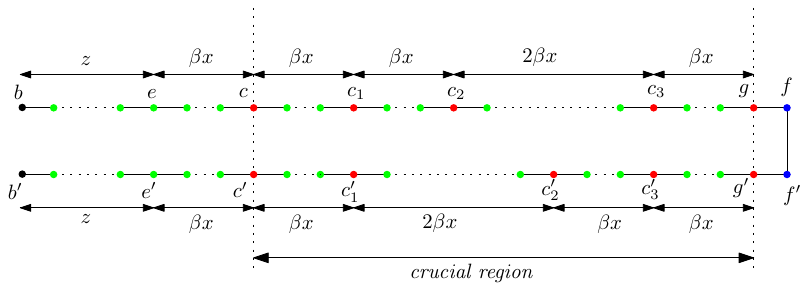}

\caption{\footnotesize Representation of Line $L$.}
\label{g2}
\end{figure}

The next two lemmas consider the behavior of the DFA $A$ on an infinite pseudo-oriented line.  

\begin{lemma}\label{state_repeat}
Let $u$ be a special node of an infinite pseudo-oriented line. Let $A$ be located at node $u$ in state $S$. If $A$ is in state $\overline{S}$ at a node $v$ of the free half-line determined by $u$, at distance $a+c+i\cdot l(S)$ from $u$, for some  $i\geq 0$ and $c\geq 0$, then for all $j\geq i+1$, $A$ will be in the same state $\overline{S}$ at the node $w$ of the free half-line determined by $u$ at distance $a+c+j\cdot l(S)$ from $u$. 
\end{lemma}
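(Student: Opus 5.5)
Since the free half-line determined by the special node $u$ is an oriented half-line, the automaton's behavior on it is translation-invariant: its dynamics depend only on its state and on the local port labels, and those labels are identical at every node of the free half-line. The plan is to combine this with the periodicity captured by $l(S)$. Starting from $u$ in state $S$, let $t_0\le t_1<t_2$ be as in the definition of $l(S)$. At $t_1$ the automaton is at distance $a$ from $u$ in state $S'$, and at $t_2$ it is again in $S'$, displaced by $l(S)$. Since $A$ is right-sided, this displacement goes away from $u$ into the free half-line.

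The first step is to show the trajectory is eventually periodic from round $t_1$ onward. By induction on $m\ge 0$, the automaton is in state $S'$ at round $t_1+m\tau$, at distance $a+m\,l(S)$ from $u$, where $\tau=t_2-t_1$. Its behavior during $[t_1+m\tau,t_1+(m+1)\tau]$ is a translate by $m\,l(S)$ of its behavior during $[t_1,t_2]$. This holds because the automaton starts the period in the same state, sees the same inputs (same degree 2, same port pattern on the oriented part), and hence makes the same moves. To justify this I need that during each period it stays inside the oriented part, so that it never reaches the special node $u$ or the other half-line. For this I would use the assumption, via Lemma~\ref{lemma bound}, that only boundedly many visits occur off the free half-line, and I may need to argue that the segment traversed in $[t_1,t_2]$ avoids $u$. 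This is the main obstacle. If the first period does touch $u$, I would instead take $t_1$ later, after the last visit to $u$; this is possible since $u$ is visited at most $k$ times, and re-choosing $t_1$ does not change $l(S)$ provided the definition is read that way.

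Next, take the given round at which $A$ is in state $\overline{S}$ at the node $v$ at distance $a+c+i\,l(S)$. This round lies in some period, say the $m$-th, of the periodic regime. Writing it as $t_1+m\tau+r$ with $0\le r<\tau$, the translate argument places $A$ in state $\overline{S}$ at round $t_1+m'\tau+r$, at distance shifted by $(m'-m)\,l(S)$, for every $m'\ge m$. Setting $m'=m+(j-i)$ gives the node $w$ at distance $a+c+j\,l(S)$ for all $j\ge i+1$. One point needs care: the visit to $v$ might occur before $t_1$, in the transient phase. Here I would use the hypothesis that the distance is at least $a+c+i\,l(S)$ with $c\ge0$, together with the fact that nodes beyond distance $a$ on the free side are reached only after $t_1$. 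That fact requires that the transient path from $u$ to distance $a$ does not overshoot; if it does overshoot, I would absorb the overshoot into the choice of $t_1$ as above. Finally, since the period translates cover all later periods, the conclusion holds for every $j\ge i+1$.
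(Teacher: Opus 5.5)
Your mechanism is the paper's: determinism plus translation invariance on the oriented free half-line, so that a run started in a given state is the translate of the run started in that state $l(S)$ nodes closer to $u$. The paper packages this as an induction on $j$, comparing runs that start in $S'$ at distances $a+i\,l(S)$ and $a+(i+1)\,l(S)$, then in $\overline{S}$ at consecutive translates. Your ``periodic from $t_1$'' formulation is cleaner.

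However, the two places where you hedge are genuine gaps, and they cannot be closed for the statement as written, where $t_1<t_2$ is the first state repetition after $t_0$. The paper's induction passes over exactly these points: it simply posits $S'$ at every distance $a+i\,l(S)$, and a visit in $\overline{S}$ occurring after it.

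\emph{Transient.} Let $A$ go from $u$ in state $Q_0$ to distance $1$ in $Q_1$, then to distance $2$ in $Q_2$, then back to distance $1$ in state $P$, and then stay in $P$ moving right forever. Then $t_1=3$, $t_2=4$, $a=1$, $l(S)=1$. $A$ is in $Q_2$ at distance $2=a+1+0\cdot l(S)$, but never in $Q_2$ at distance $3$. So your ``fact'' that nodes beyond distance $a$ are reached only after $t_1$ is false, and the conclusion fails for such a visit.

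\emph{Special node.} If $t_1=t_0$, the first period starts at $u$, whose reversed ports can make $l(S)$ unrelated to the eventual drift. Let $S$ take port $0$ (rightward at $u$, leftward elsewhere) and $R$ take port $1$, with $S\to R\to S$. Then $A$ is in $S$ at distance $2$, so $a=0$ and $l(S)=2$. From there $S$ steps left, and $A$ may switch to a state that moves right one node per round. Now $A$ is in $S$ at distance $a+0+1\cdot l(S)$ in a round $\ge t_1$, yet never in $S$ at distance $4$. Re-choosing $t_1$ after the last visit to $u$ changes $l(S)$ to $1$; this is a change of the value, not merely of its ``reading''.

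What your argument does prove, completely, is a corrected lemma. Take $t_1<t_2$ to be the first state repetition after the last visit to $u$. Then the run from $t_1$ on lies on non-special nodes of the free half-line, so inputs along a run and its translate coincide. The displacement is positive, since $A$ stays right of $u$ and visits no node infinitely often. Also require the visit in state $\overline{S}$ to occur in a round $\ge t_1$. Equivalently, shift $t_1$ forward by whole periods, which preserves $l(S)$ but enlarges $a$, until $a$ exceeds every distance reached before the periodic regime; then $c\ge 0$ forces the visit into that regime. Under this hypothesis your translate step, mapping round $t_1+m\tau+r$ to $t_1+(m+j-i)\tau+r$, is a full proof. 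State this modified lemma explicitly, rather than claiming the original one ``provided the definition is read that way''.
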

\begin{proof}
The proof is by induction on $j$ (cf. Fig. \ref{g1}).\\
\textit{Base Case:} Let $A$ be in state $\overline{S}$ in round $t'_1$ at the node $v$ of the free half-line determined by $u$, at distance $a+c+i\cdot l(S)$ from node $u$, for some $i\geq 0$ and $c\geq 0$. Suppose $A$ is in some state $S^*\neq \overline{S}$ in round $t'_2$ at the node $w$ of the free half-line determined by $u$, at distance $a+c+(i+1)\cdot l(S)$ from $u$. Let $A$ be in some state $S'$ in a round $t_1$, at the node $v_1$ of the free half-line determined by $u$, at distance $a+i\cdot l(S)$ from $u$. Also, let $A$ be in state $S'$ in a round $t_2$, such that $t'_2>t_2>t_1$, at the node $v_2$ of the free half-line determined by $u$, at distance $a+(i+1)l(S)$ from $u$. Consider the sequence of state transitions starting from state $S'$ in the time intervals $[t_1,t'_1]$ and $[t_2,t'_2]$. Let $t$ be the smallest round such that $A$ transitions to different states $S_1$ and $S_2$ from some state $S_3$ in rounds $t_1+t\leq t'_1$ and $t_2+t\leq t'_2$, respectively. This is impossible for a DFA. Thus, the statement of the lemma holds for $j=i+1$.\\
\textit{Inductive Hypothesis:} Suppose the statement of the lemma is true for all $j$, such that $i+1< j\leq m-1$, for some positive integer $m$.\\
\textit{Inductive Step:} Suppose $A$ is in some state $S^*\neq \overline{S}$ in some round $t_3$ at the node $z$ of the free half-line determined by $u$, at distance $a+c+m\cdot l(S)$ from $u$. By the inductive hypothesis, $A$ is in state $\overline{S}$ in rounds $t_5$ and $t_4$, respectively, at nodes $z_1$ and $z_2$ of the free half-line determined by $u$, that are at distances $a+c+(m-2)\cdot l(S)$ and $a+c+(m-1)\cdot l(S)$, respectively, from $u$. Now, compare the sequence of state transitions starting from state $\overline{S}$ in the time intervals $[t_4,t_3]$ and $[t_5,t_4]$. Let $t$ be the smallest round such that $A$ transitions to different states $S_4$ and $S_5$ from some state $S_4$ in rounds $t_4+t\leq t_3 $ and $t_5+t\leq t_4$, respectively. This is impossible for a DFA. Thus, the statement holds for $j=m$.
\end{proof}

\begin{lemma}\label{same_state}
Let $u$ be a special node of an infinite pseudo-oriented line. Let $A$ be located at node $u$ in state $S$. If $A$ is in state $\overline{S}$ at a node $v$ of the free half-line determined by $u$, at distance $\beta x$ from $u$, then $A$ will be in the same state $\overline{S}$ at the node $w$ of the free half-line determined by $u$ at distance $i\beta x$ from $u$, for all $i>1$. 
\end{lemma}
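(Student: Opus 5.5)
We derive the statement from Lemma \ref{state_repeat} by a suitable choice of the parameters $i$ and $c$. The key arithmetic fact is that $x=2\cdot l(S_1)\cdots l(S_k)$ is a multiple of $l(S)$ for the starting state $S$, since $S$ is one of $S_1,\dots,S_k$. Hence $\beta x = q\cdot l(S)$ for the positive integer $q=\beta x/l(S)$. We also need $\beta x\geq a$, where $a$ is the distance from $u$ to the location of $A$ in round $t_1$. This is where the bound $\beta$ of Lemma \ref{lemma bound} enters. The quantity $a$ is at most the number of rounds before the first state repetition, hence at most $k$. We may take $\beta\geq k$ without loss of generality, since enlarging $\beta$ keeps Lemma \ref{lemma bound} true, and $x\geq 2$. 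So $\beta x\geq a$ indeed holds. (If the authors intend $a\le \beta$ differently, the same inequality is what must be checked.)

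Next, write $\beta x = a + c + i\cdot l(S)$ with $i\geq 0$ and $0\leq c< l(S)$. To do this, let $i=\lfloor(\beta x - a)/l(S)\rfloor$ and $c=\beta x-a-i\,l(S)$. Lemma \ref{state_repeat} then says that $A$ is in state $\overline{S}$ at the nodes of the free half-line at distance $a+c+j\cdot l(S)$ from $u$, for every $j\geq i+1$.

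Now take any $i'>1$ and set $j=i+(i'-1)q$. Then
\[
a+c+j\cdot l(S) \;=\; \beta x + (i'-1)\,q\,l(S) \;=\; \beta x+(i'-1)\beta x \;=\; i'\beta x .
\]
Since $i'\geq 2$ and $q\geq 1$, we have $j\geq i+1$. So Lemma \ref{state_repeat} applies and gives state $\overline{S}$ at distance $i'\beta x$, which is exactly the claim.

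The main obstacle is justifying $\beta x\geq a$, i.e.\ that the node at distance $\beta x$ lies beyond the pre-periodic part of the trajectory. This requires making precise that $a\le k\le \beta x$. A second point to check is that "$A$ is in state $\overline S$ at $v$" is read as "at some visit" (in fact the first arrival) rather than at every visit. I would first try to fix this interpretation consistently with how Lemma \ref{state_repeat} is stated, since that lemma is used verbatim. The remaining steps are routine arithmetic.
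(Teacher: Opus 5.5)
Your proposal is correct and follows the paper's route. You write $\beta x$ as $a$ plus a remainder and a multiple of $l(S)$, use $l(S)\mid x$ to see that $i'\beta x$ has the same remainder with a strictly larger multiple, and invoke Lemma \ref{state_repeat}. Your explicit choice $j=i+(i'-1)\beta x/l(S)$ stands in for the paper's comparison $l_1=l_2$, and you also spell out the condition $\beta x\ge a$, which the paper uses tacitly when it writes $k_1\ge 0$.
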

\begin{proof}
We have $\beta x=a+k_1\cdot l(S)+l_1$ and $i\beta x=a+k_2\cdot l(S)+l_2$ where $k_1,k_2\geq 0$, and $0\leq l_1,l_2 <l(S)$. This implies that $(i-1)\beta x=(k_2-k_1)\cdot l(S)+(l_2-l_1)$. Since $\beta x$ is a multiple of $l(S)$, the integer $(k_2-k_1)\cdot l(S)+(l_2-l_1)$ must be a multiple of $l(S)$. Since $0\leq l_1,l_2 <l(S)$, we must have $l_1=l_2$. From Lemma~\ref{state_repeat}, it follows that $A$ will be in the same state $\overline{S}$ at nodes $u$ and $w$ of the free half-line determined by $u$ at distances $\beta x$ and $i\beta x$ from $u$, respectively. 
\end{proof}

We assume that agents $A_1$ and $A_2$ start simultaneously in the state $S_0$ at nodes $b$ and $b'$, respectively. The following lemma shows that the agents will visit the beginnings of the crucial regions at nodes $c$ and $c'$ in the same state and in the same round.

\begin{lemma}\label{same state round}
If the agents start simultaneously at nodes $b$ and $b'$, then they reach nodes $c$ and $c'$ in the same state and in the same round. Moreover, their inputs in this round are equal.
\end{lemma}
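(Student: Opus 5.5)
The plan is to compare the two agents with a single agent running on an auxiliary infinite line, and to show that neither agent can distinguish its own copy from that line before it reaches its crucial region. Consider an infinite pseudo-oriented line $\Lambda$ obtained from a finite oriented line, identical to $C$ on the segment from $b$ up to (but not including) $c$, extended to the right as an oriented line. The segment of $C$ from $b$ to $c$ is isomorphic, with its port labeling, to the segment of $C'$ from $b'$ to $c'$. Both consist of a degree-1 beginning followed by $z+\beta x-1$ green nodes with the same orientation. Both agents start in state $S_0$ with the same number $\rho$ of pebbles and in the same round.

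The first step is to show by induction on the round number that, as long as neither agent has entered $c$ (resp. $c'$), the agents occupy corresponding nodes (at equal distance from $b$ and $b'$). In each such round they are also in the same state, carry the same number of pebbles, and see identical inputs: degree, entry port, and pebble bit. The induction step uses only determinism of $\delta$ and $\lambda$. Identical state and input give identical output, so both agents perform the same pebble action and take the same port. Since the neighborhoods are isomorphic in the two copies, they arrive at corresponding nodes. The pebble configurations also stay identical, because each agent only affects pebbles in its own copy.

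Here I must argue that neither agent can be influenced by the other agent's pebbles before reaching $c$ or $c'$. This holds because each agent remains in its own copy: the copies are joined only through $f,f'$, which lie beyond $g$ and $g'$. To reach the other copy, an agent would have to pass through $c$ first.

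The second step is to show that the agents indeed reach $c$ and $c'$ at all. Otherwise the claim is vacuous, but the statement implicitly asserts a first-visit round. By the induction, they reach them simultaneously if at all, since corresponding positions coincide in every round.

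The remaining point, and the one I expect to be delicate, is the equality of inputs at the moment of arrival. The degree at $c$ and $c'$ is $2$ in both. Both agents must enter from the neighbor on the side of $b$ (resp. $b'$), because it is the first visit. Since $c$ and $c'$ are both red with ports exchanged in the same way, the entry port is the same. The pebble bit is $0$ in both, since all pebbles lie within distance $z<z+\beta x$ of the beginning. The number of carried pebbles is equal by the induction. I would also remark that pebbles are never dropped after round $t$, which is what makes $z$ well defined as the bound used here.
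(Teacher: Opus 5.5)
Your proof is correct and follows essentially the same route as the paper: a lockstep induction on rounds showing identical states, inputs and pebble counts while each agent is confined to the isomorphic green segment of its own copy, followed by a check that the inputs at the arrival round agree (degree $2$, the same entry port $1$ at the red nodes $c,c'$, an empty node). The auxiliary line $\Lambda$ you introduce is never used and can be dropped; like the paper, you treat the existence of the arrival round conditionally rather than proving it.
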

\begin{proof}
The proof is by induction on round $t$. We prove that in round $t\geq 0$, before reaching $c$ and $c'$ respectively, the inputs of the agents are the same, and the agents are in the same state. Note that all nodes of degree 2 in the line segment between nodes $b$ and $c$ (resp. $b'$ and $c'$) are green nodes. \\
\textit{Base Case:} Initially, the agents are located at nodes $b$ and $b'$. Both nodes $b$ and $b'$ have degree 1 and are empty. Since the agents start with $\rho$ pebbles in the initial state $S_0$, we have the same input $(1,-1,0,\rho)$ and the same state $S_0$ for both agents in round $t=0$.\\
\textit{Inductive Hypothesis:} Suppose that for all rounds $t\leq r-1$, the inputs of the agents are the same, and the agents are in the same state.\\
\textit{Inductive Step:} By the inductive hypothesis, the agents have the same input $I$, and they are in the same state (say $S'$) in round $r-1$. First, consider the state to which agents transition in round $r-1$. Since the agents are in state $S'$ with the same input $I$, both agents will transition to the state $\delta(S', I)$. 

Next, we consider the agents' outputs in round $r-1$. Since $\lambda (S')$ is identical for the agents, they will have the same value of $m$ in round $r$. Based on the positions of the agents in round $r-1$, we have the following cases:
\begin{enumerate}
\item agents are at nodes $b$ and $b'$: If $j=-1$, the agents remain at nodes $b$ and $b'$ in round $r-1$. Thus, $d=1$, and $i=-1$ for both agents in round $r$. If $j=0$, the agents visit a degree 2 node entering through port 0 from nodes $b$ and $b'$, respectively. Thus, $d=2$ and $i=0$ for both agents in round $r$. Thus the part of the input of both agents concerning the port number and the node degree is the same in round $r$.
\item agents are at some degree 2 nodes. Suppose the agents are at nodes $v$ and $v'$ in round $r-1$. If $j=-1$, the agents remain at nodes $v$ and $v'$ in round $r-1$. Thus, $d=2$, and $i=-1$ for both agents in round $r$. If $j=0$,  and the agents visit a degree 2 node entering through port 1 from nodes $v$ and $v'$, respectively, in round $r$, then $d=2$ and $i=1$ for both agents in round $r$. 
If $j=0$, and the agents visit nodes $b$ and $b'$ by taking port 0 from nodes $v$ and $v'$, respectively, they will enter $b$ and $b'$ through port 0. Thus, $d=1$ and $i=0$ for both agents in round $r$.
If $j=1$, then the agents visit a degree 2 node entering through port 0 from nodes $v$ and $v'$, respectively, in round $r$. Thus, $d=2$ and $i=0$ for both agents in round $r$. 
Thus the agents always have the same part of the input  concerning the port number and the node degree
 in round $r$.
\end{enumerate}
It remains to show that the part of the input  concerning the presence of a pebble is the same for both agents in round $r$.
If $j=0$ or $j=1$, then the agents visit a node other than their location in round $r-1$. If the nodes visited in round $r$ were visited in some round $r'<r-1$, then by the inductive hypothesis, the agents will have the same bit $h$ in round $r$. If the nodes visited in round $r$ were not visited in any previous rounds, then $h=0$ for both agents. Thus the input of both agents  is the same in round $r$.

Therefore, in any round $t\geq 0$, before reaching $c$ and $c'$ respectively, the inputs of the agents are the same, and the agents are in the same state. Hence, if the agents start simultaneously at nodes $b$ and $b'$, then they will reach nodes $c$ and $c'$ in the same state and in the same round. In this round they will have the same input.
\end{proof}

The following lemma shows that if the agents enter the crucial region through nodes $c$ and $c'$ in the same state and in the same round, then they exit the crucial region through nodes $g$ and $g'$, respectively, in the same state and in the same round. 

\begin{lemma}\label{exit crucial}
If the agents enter crucial regions through nodes $c$ and $c'$ in the same state and  in the same  round, then they reach nodes $g$ and $g'$, respectively, in the same state and in the same round. Moreover, the agents have the same input in this round.
\end{lemma}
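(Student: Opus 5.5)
The plan is to split each crucial region at its red nodes and use Lemma \ref{same_state} to synchronize the agents at the red nodes that sit at the same distance from $b$ and $b'$. The red nodes of $C$ lie at distances $z+\beta x, z+2\beta x, z+3\beta x, z+5\beta x, z+6\beta x$ from $b$, and those of $C'$ at $z+\beta x, z+2\beta x, z+4\beta x, z+5\beta x, z+6\beta x$ from $b'$. So the configurations differ only in the middle: $C$ has a red node at offset $2\beta x$ from $c_1$ and $C'$ at offset $3\beta x$. Every red node is beyond distance $z$, so no pebble lies in the crucial regions. Inside them the agents therefore always read $h=0$, and the only data distinguishing nodes are the port swaps.

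The first step uses Lemma \ref{lemma bound} and the assumption that $A$ is right-sided. From a red node $r$ with $A$ entering in state $S$, the agent visits at most $\beta$ nodes of the non-free half-line, which points back toward $b$. Hence, since $\beta x>\beta$, it never returns to the previous red node. Its motion until it reaches the next red node, at distance a multiple of $\beta x$, is the same as on an infinite pseudo-oriented line whose special node is $r$. In that line, swapping ports at a red node makes the free half-line look oriented from $r$ onward, and the segment back toward $b$ is irrelevant because of the $\beta$ bound. So between consecutive red nodes each agent behaves as on the infinite pseudo-oriented line with special node equal to the last red node entered. The walk is deterministic, so equal state and input at equally placed red nodes give equal behaviour until the next red node, by the same induction as in Lemma \ref{same state round}.

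The second step chains the segments. The entry at $c,c'$ is synchronized by hypothesis, so the agents reach $c_1,c'_1$ in the same state and round, because both gaps equal $\beta x$. From $c_1$ the agent in $C$ meets red $c_2$ after $\beta x$ and then $c_3$ after $2\beta x$; the agent in $C'$ meets $c'_2$ after $2\beta x$ and then $c'_3$ after $\beta x$. Lemma \ref{same_state} applied at special node $c_1$ says the state at distance $\beta x$ equals the state at distance $i\beta x$. It is then applied again from $c_2$ (respectively $c'_2$). Together these should show that both agents arrive at $c_3,c'_3$ in the same state, and then at $g,g'$ after a further common gap $\beta x$.

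The main obstacle is the \emph{round} synchronization across the asymmetric middle part. Lemma \ref{same_state} gives equal states at equal distances but not equal arrival times, and the time to cross $2\beta x$ then $\beta x$ must be shown equal to crossing $\beta x$ then $2\beta x$. I would argue via the periodicity in Lemma \ref{state_repeat}. Once past the transient $a$, the agent traverses each further block of length $l(S)$ in a fixed number of rounds, and $l(S)$ divides $\beta x$ because of the factor $x$. So the time to cross $j\beta x$ is an affine function of $j$ determined by the entry state. Equal entry states at $c_1,c'_1$ and the state equalities from Lemma \ref{same_state} should then make the total times from $c_1$ to $c_3$ agree.

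Finally, the inputs at $g,g'$ match. Both are degree-2 red nodes entered through the same port from the same side, and neither holds a pebble.
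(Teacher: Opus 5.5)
\textbf{There is a genuine gap, at exactly the round-synchronization step you flag, and it cannot be closed for this construction.}

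Your segment-by-segment decomposition and your state bookkeeping follow the paper. Your state claims are sound, and more careful than the paper's, since you let the arrival state at $c_2,c'_2$ differ from the one at $c_1,c'_1$. (Small slip: the offsets $2\beta x$ and $3\beta x$ in your first paragraph are measured from $c$, not from $c_1$.)

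The affine argument does not close the gap. Let $\sigma_1$ be the common arrival state at $c_1,c'_1$ and $\sigma_2$ the common arrival state at $c_2,c'_2$. Write $F_\sigma(j\beta x)=\alpha_\sigma+j\gamma_\sigma$ for the time to advance $j\beta x$ from a red node entered in state $\sigma$. From $c_1$ to $c_3$ the first agent spends $F_{\sigma_1}(\beta x)+F_{\sigma_2}(2\beta x)$ rounds, and the second spends $F_{\sigma_1}(2\beta x)+F_{\sigma_2}(\beta x)$. These differ by $\gamma_{\sigma_2}-\gamma_{\sigma_1}$. Nothing you establish, including the state equalities from Lemma~\ref{same_state}, makes this difference vanish.

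The paper gets past this point only by asserting that every red node is first reached in the same state $S'$. Then the crossing time depends only on the gap length, so $T_3=T'_2$ and $T'_3=T_2$, and the totals agree by exchanging the gaps $\beta x$ and $2\beta x$. That assertion does not follow from Lemma~\ref{same_state}, which concerns a single special node and says nothing about the state at the next red node.

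In fact the step fails for this construction. Consider the following DFA:
\begin{itemize}
\item it walks straight, leaving each degree-2 node by the port it did not enter by;
\item it flips a parity bit whenever it enters a node by port $1$, which on the way from $b$ happens exactly at red nodes;
\item it pauses one round at each node while the parity is odd.
\end{itemize}
It drops no pebbles, never returns to a node, and is right-sided. The agents reach $c,c'$ in the same round and state. It arrives at $c_1$ with odd parity but at $c_2$ with even parity. It needs $2\beta x+\beta x+4\beta x+\beta x=8\beta x$ rounds from $c$ to $g$, but only $2\beta x+2\beta x+2\beta x+\beta x=7\beta x$ rounds from $c'$ to $g'$.

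So the conclusion of the lemma is false in general, and neither your affine argument nor the paper's exchange argument can be completed. A correct proof would need a different placement of red nodes, or a different invariant, robust to speeds that depend on the state in which each red node is entered.
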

\begin{proof}
Suppose that the agents $A_1$ and $A_2$ enter their crucial regions through nodes $c$ and $c'$, respectively, in some state $S$ in round $t_0$, for the last time before reaching nodes $g$ and $g'$.
By the definition of nodes $e$ and $e'$ on line $L$, the number $m$  of carried pebbles in the input of agents remains identical for all rounds $t\geq t_0$. Also, $d=2$ and $h=0$ in the inputs at all nodes in the crucial region. Thus, the part of the input of both agents concerning the node degree, the bit $h$, and the number of carried pebbles is the same in any round $t\geq t_0$, in the crucial region. Consider a red node in the crucial region and consider the first visit of this node by an agent. 
We examine the states of the agents, the entry ports, and the rounds at this visit. 
\begin{enumerate}
\item Nodes $c_1$ and $c'_1$: The distance between $c$ and $c_1$ (resp. between $c'$ and $c'_1$) is $\beta x$. The agents visit nodes $c$ and $c'$ in the same round $t_0$, have the same input, and are in the same state $S$, by Lemma \ref{same state round}. Therefore, 
the agents will reach nodes $c_1$ and $c'_1$ for the first time in the same state (say state $S'$), and in the same round (say round $t_1>t_0$). Let $T_1=t_1-t_0$. Since the agents visit nodes $c_1$ and $c'_1$ for the first time, the entry port must be 1. Thus, $A_1$ and $A_2$ have the same input at nodes $c_1$ and $c'_1$, respectively.

\item Nodes $c_2$ and $c'_2$: The distance between $c_1$ and $c_2$ is $\beta x$, and the distance between $c'_1$ and $c'_2$ is $2\beta x$. From Lemma~\ref{same_state}, it follows that $A_1$ and $A_2$ reach nodes $c_2$ and $c'_2$, respectively, in the same state $S'$. 
Let $t_2$ and $t'_2>t_2$ be the rounds in which $A_1$ and $A_2$ reach nodes $c_2$ and $c'_2$, respectively, for the first time. 
Notice that, all nodes visited by agent $A_1$  during the time interval $[t_1+1, t_2-1]$, and all nodes visited by agent $A_2$
during the time interval $[t'_1+1, t'_2-1]$ are green. Hence we can apply Lemma \ref{same_state}, although this lemma assumes that the agent is in a pseudo-oriented line.
Let $T_2=t_2-t_1$ and $T'_2=t'_2-t_1$. Also, let $\delta=T'_2-T_2$. Since the agents visit nodes $c_2$ and $c'_2$ for the first time, the entry port must be 1. Thus, $A_1$ and $A_2$ have the same input at nodes $c_2$ and $c'_2$, respectively.
\item Nodes $c_3$ and $c'_3$: The distance between $c_2$ and $c_3$ is $2\beta x$, and the distance between $c'_2$ and $c'_3$ is $\beta x$. Lemma~\ref{same_state} guarantees that the agents reach nodes $c_3$ and $c'_3$ in the same state $S'$. Let $t_3$ and $t'_3$ be the rounds in which $A_1$ and $A_2$ reach nodes $c_3$ and $c'_3$, respectively, for the first time. Let $T_3=t_3-t_2$ and $T'_3=t'_3-t'_2$. We have $T_3-T'_3=\delta$. Since the agents visit nodes $c_3$ and $c'_3$ for the first time, the entry port must be 1. Thus, $A_1$ and $A_2$ have the same input at nodes $c_3$ and $c'_3$, respectively.
\item Nodes $g$ and $g'$: The distance between $c_3$ and $g$ is $\beta x$, and the distnace between $c'_3$ and $g'$ is $\beta x$.
From Lemma~\ref{same_state}, it follows that the agents reach nodes $g$ and $g'$ in the same state $S'$. Let $t_4>t_3$ and $t'_4>t_3'$ be the rounds in which $A_1$ and $A_2$ reach nodes $g$ and $g'$, respectively, for the first time. Let $T_4=t_4-t_3$ and $T'_4=t'_4-t'_3$. We have $T_4=T'_4$. Since the agents visit nodes $g$ and $g'$ for the first time, the entry port must be 1. Thus, $A_1$ and $A_2$ have the same input at nodes $g$ and $g'$, respectively.
\end{enumerate}
Let $T$ (resp. $T'$) denote the time required to reach node $g$ by $A_1$ (resp. node $g'$ by $A_2$) for the first time after round $t_0$. We have $T=T_1+T_2+T_3+T_4=T_1+T_2+(T'_3+\delta)+T_4=T_1+(T_2+\delta)+T'_3+T_4=T_1+T'_2+T'_3+T_4=T_1+T'_2+T'_3+T'_4=T'$. Thus, $A_1$ and $A_2$ reach nodes $g$ and $g'$ for the first time in the same round $t_4=t'_4$. In this round, they have the same input and are in the same state~$S'$. 
\end{proof}

The next lemma shows that if agents $A_1$ and $A_2$ are at nodes $g$ and $g'$, respectively, in the same state and in the same round, then in all successive rounds the distances of agents $A_1$ and $A_2$ from nodes $f$ and $f'$, respectively, are equal. Let $r$ denote the round in which agents visit nodes $g$ and $g'$ for the first time. For all $t\geq r$, let $D_t$ (resp. $D'_t$) represent the distance between node $f$ (resp. node $f'$) and the location of $A_1$ (resp. $A_2$) in round $t$.
\begin{lemma}\label{lemma distance}
For all rounds $t\geq r$, $D_t=D'_t$. 
\end{lemma}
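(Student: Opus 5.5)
Fix the round $r$ given by Lemma \ref{exit crucial}: $A_1$ is at $g$, $A_2$ is at $g'$, both are in the same state $S'$ and receive the same input. The plan is to prove by induction on $t\ge r$ a stronger invariant $P(t)$, which gives $D_t=D'_t$ immediately. The invariant $P(t)$ says:
\begin{itemize}
\item[(a)] $A_1$ and $A_2$ are in the same state with the same input in round $t$;
\item[(b)] the distances $D_t$ and $D'_t$ of $A_1$ from $f$ and of $A_2$ from $f'$ are equal;
\item[(c)] the map sending each node of $C$ to the node of $C'$ at the same distance from the blue end, and vice versa, sends the location of $A_1$ to the location of $A_2$ whenever both are in the region between $g$ and $g'$.
\end{itemize}
The point of the reflection in (c) is that $L$ is symmetric about the edge $ff'$ on this region. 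The nodes strictly between $g$ and $f$, and those between $g'$ and $f'$, are green with the oriented-line labelling, so they correspond port for port. The nodes $g$ and $g'$ are both red. The nodes $f$ and $f'$ both carry port $1$ on the joining edge and port $0$ towards the interior. Neither agent dropped a pebble there, since all pebbles lie within distance $z$ of $b$ or $b'$. Hence matched nodes present the same degree, the same pebble bit $h=0$, and identically labelled ports.

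The inductive step is the usual DFA argument. From the same state and the same input, both agents move to the same next state and issue the same output $(\alpha,j,\kappa)$. If $j=-1$, or $j$ is not a port, both stay put. Otherwise both take port $j$. By the port correspondence, the two moves go to matched nodes: either both move one step closer to their blue node, or both move one step farther, or both cross the edge $ff'$. In the last case $A_1$ goes from $f$ to $f'$ and $A_2$ from $f'$ to $f$, and the distances from $f$ and from $f'$ respectively stay equal, namely both become $1$ less the other way. The entry ports then coincide, which gives equal inputs again, so $P(t+1)$ holds.

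The main obstacle is that agents may leave the region, passing $g$ (resp.\ $g'$) back into the crucial region or further towards $b$ (resp.\ $b'$). There the lines are not symmetric: the red node $c_2$ is at a different distance from $g$ than $c'_2$ is from $g'$, and the pebble-bearing zones may differ. To handle this I would extend the correspondence to a reflection-like pairing on the whole of $C\cup C'$ viewed from the free side, and invoke Lemma \ref{lemma bound} with the choice of red spacings $\beta x$. By Lemma \ref{lemma bound}, applied at $g$ (a special node, with the right-sidedness assumption), after round $r$ an agent can visit at most $\beta$ nodes on the non-free side of $g$, towards $c_3$. Since every red node and pebble lies farther than $\beta x\ge\beta$ from $g$, each agent moving away from $f$ only ever sees green, empty, oriented nodes, which are locally identical on both sides. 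So the port correspondence, with matching by distance from $f$ (resp.\ $f'$), extends to all nodes either agent can reach after round $r$, and the induction goes through. The delicate point I expect to check carefully is that Lemma \ref{lemma bound}, stated for an infinite line with one agent's run, legitimately transfers to $L$. I would argue this by noting that until an agent reaches a node differing from the pseudo-oriented model, its run coincides with the model run, and the bound $\beta$ prevents it from ever getting that far.
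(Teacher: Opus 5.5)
Your route is the paper's: an induction from round $r$ on ``same state, same input, $D_t=D'_t$'', using the mirror pairing of $C$ with $C'$, with an appeal to Lemma~\ref{lemma bound} to keep the agents where that pairing preserves ports. The paper's case analysis uses the pairing implicitly; your item (c) makes it explicit.

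The confinement step is where the argument breaks. Lemma~\ref{lemma bound} at $g$ bounds only how far $A_1$ goes on its non-free side, back towards $c_3$. It says nothing about the free side. After round $r$, the free side of $A_1$ runs through $f$, across the edge $ff'$, and along $C'$ towards $b'$. On that side the pairing fails: $c'_2$ is at distance $2\beta x$ from $g'$, while the node at distance $2\beta x$ from $g$ in $C$ is green ($c_2$ is at distance $3\beta x$). Your transfer argument does not close this gap. The first node where $L$ differs from the pseudo-oriented model around $g$ is $f'$, two steps from $g$ on the free side: it is entered by port $1$, and the orientation is reversed beyond it. So the run on $L$ stops matching the model run long before any bound $\beta$ applies. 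Your claim that the port correspondence extends to all nodes either agent can reach after round $r$ is therefore unjustified, and in fact false.

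The statement itself can fail, so this step cannot be repaired locally. Consider the following automaton.
\begin{itemize}
\item It walks straight: at a node of degree $2$ it leaves by the port it did not enter, at a leaf by port $0$, and it starts with port $1$ at a node of degree $2$.
\item It stays put one extra round whenever it enters a node by port $1$ after entering the previous node by port $0$.
\end{itemize}
It drops no pebbles and enters each node of any infinite line at most once. Going from $b$ towards $f$, it enters green nodes by port $0$ and red nodes by port $1$. So it pauses exactly at red nodes, and the agents reach $g$ and $g'$ in the same round and state.

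After crossing $ff'$, an agent moving towards $b'$ (resp.\ $b$) enters green nodes by port $1$ and red nodes by port $0$, so it pauses just beyond each red node. Thus $A_1$ pauses just beyond $c'_2$, while $A_2$, at the mirror node of $C$, does not. From then until $A_2$ pauses just beyond $c_2$, $D_t=D'_t-1$.

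The same example refutes the paper's own justification, namely that by Lemma~\ref{lemma bound} the agents never visit $c_3$ and $c'_3$ after round $r$: here $A_1$ reaches $c'_3$ shortly after round $r$. Your invariant therefore holds only until an agent first reaches the asymmetric part of the opposite copy. Showing that the agents still cannot meet after that needs a different kind of argument.
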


\begin{proof}
We prove the following invariant by induction on round $t\geq r$.

\begin{itemize}
\item in round $t$, the inputs of the agents are the same and the agents are in the same state.
\item $D_t=D'_t$
\end{itemize}
\textit{Base Case:} In round $t=r$, the agents are located at nodes $g$ and $g'$. From Lemma~\ref{exit crucial}, it follows that the agents have the same input, and they are in the same state in round $r$. We have $D_r=1=D'_r$. \\
\textit{Inductive Hypothesis:} Suppose that the invariant holds for all rounds $t$, such that $t\geq r$ and $t\leq s-1$. \\
\textit{Inductive Step:} By the inductive hypothesis, the agents have the same input $I$, and they are in the same state (say $S'$) in round $s-1$. Moreover, $D_{s-1}=D'_{s-1}$. Since the agents are in state $S'$ with the same input $I$, both agents will transition to the state $\delta(S', I)$. Thus, the agents have the same state in round $s$. 

From Lemma~\ref{lemma bound}, it follows that the agents never visit nodes $c_3$ and $c'_3$ after round $r$. By the definition of nodes $e$ and $e'$ on line $L$, the number $m$  of carried pebbles in the input of agents remains identical for all rounds $t\geq r$. Also, $d=2$ and $h=0$ in the inputs at all nodes for all rounds $t\geq r$. Thus, the part of the input of both agents concerning the node degree, the bit $h$, and the number of carried pebbles is the same in round $s$. We examine the entry port $i$, and the distances $D_s$ and $D'_s$ in round $s$. If $j=-1$, the agents remain in round $s$ at the nodes visited by them in round $s-1$. Thus, $D_{s}=D_{s-1}=D'_{s-1}=D'_s$, and $i=-1$ for both agents in round $s$. 
If $j\neq -1$, we have the following cases, based on the location of agents $A_1$ and $A_2$ in round $s-1$:
\begin{enumerate}

\item $A_1$ is at node $g$ and $A_2$ is at node $g'$ in round $s-1$. We have $D_{s-1}=D'_{s-1}=1$ in round $s-1$. If $j=0$, agents $A_1$ and $A_2$ visit nodes $f$ and $f'$, entering through port 0 from nodes $g$ and $g'$, respectively. Thus, $D_s=D'_s=0$ and $i=0$ for both agents in round $s$. If $j=1$, then agents $A_1$ and $A_2$ visit a green node entering through port 1 from nodes $g$ and $g'$, respectively, in round $s$. Then $D_s=D'_s=2$ and $i=1$, for both agents in round $s$. 

\item $A_1$ is at node $g'$ and $A_2$ is at node $g$ in round $s-1$. We have $D_{s-1}=D'_{s-1}=2$ in round $s-1$. If $j=0$, agents $A_1$ and $A_2$, visit nodes $f'$ and $f$ in round $s$, entering through port 0 from nodes $g'$ and $g$, respectively. Thus, $D_s=D'_s=1$ and $i=0$ for both agents in round $s$. If $j=1$, then agents $A_1$ and $A_2$ visit a green node entering through port 1 from nodes $g$ and $g'$, respectively, in round $s$. Then $D_s=D'_s=3$ and $i=1$, for both agents in round $s$. 

\item $A_1$ is at node $f$ and $A_2$ is at node $f'$ in round $s-1$. We have $D_{s-1}=D'_{s-1}=0$ in round $s-1$. If $j=0$, agents $A_1$ and $A_2$ visit nodes $g$ and $g'$, entering through port 0 from nodes $f$ and $f'$, respectively. Thus, $D_s=D'_s=1$ and $i=0$, for both agents in round $s$. If $j=1$, then agents $A_1$ and $A_2$ visit nodes $f'$ and $f$ entering through port 1 from nodes $f$ and $f'$, respectively, in round $s$. Then $D_s=D'_s=1$ and $i=1$, for both agents in round $s$.

\item $A_1$ is at node $f'$ and $A_2$ is at node $f$ in round $s-1$. We have $D_{s-1}=D'_{s-1}=1$ in round $s-1$. If $j=0$, agents $A_1$ and $A_2$ visit nodes $g'$ and $g$, entering through port 0 from nodes $f'$ and $f$, respectively. Thus, $D_s=D'_s=2$ and $i=0$, for both agents in round $s$. If $j=1$, then agents $A_1$ and $A_2$ visit nodes $f$ and $f'$ entering through port 1 from nodes $f'$ and $f$, respectively, in round $s$. Then $D_s=D'_s=0$ and $i=1$, for both agents in round $s$.

\item $A_1$ and $A_2$ are at some green nodes $v_1$ and $v_2$, respectively. By our construction of line $L$, the distance of a green node from nodes $f$ and $f'$ is at least 2. The following three cases are possible (cf. Fig. \ref{g3}).

\begin{figure}[h]

\centering

{ \includegraphics[width=0.95\columnwidth]{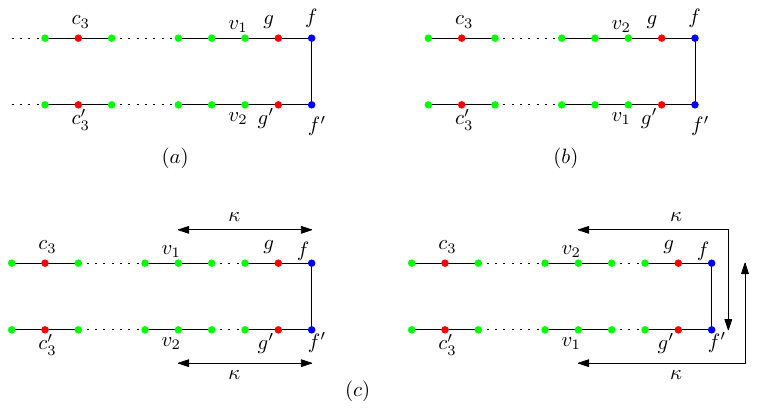}}

\caption{\footnotesize Illustration of subcases (a), (b) and (c) of case 5 in the proof of Lemma \ref{lemma distance}.}
\label{g3}
\end{figure}

\begin{enumerate}
\item nodes $v_1$ and $v_2$ are at distance 2 from nodes $f$ and $f'$, respectively. In this case, nodes $v_1$ and $v_2$ are adjacent to nodes $g$ and $g'$, respectively. We have $D_{s-1}=D'_{s-1}=2$ in round $s-1$. If $j=0$, the agents visit a green node entering through port 1 from nodes $v_1$ and $v_2$, respectively, in round $s$. Thus, $D_s=D'_s=3$ and $i=1$, for both agents in round $s$. If $j=1$, then agents $A_1$ and $A_2$ visit nodes $g$ and $g'$ entering through port 1 from nodes $v_1$ and $v_2$, respectively, in round $s$. Then $D_s=D'_s=1$ and $i=1$, for both agents in round $s$.
\item nodes $v_1$ and $v_2$ are at distance 3 from nodes $f$ and $f'$, respectively. Moreover, nodes $v_1$ and $v_2$ are adjacent to nodes $g'$ and $g$, respectively. We have $D_{s-1}=D'_{s-1}=3$ in round $s-1$. If $j=0$, the agents visit a green node entering through port 1 from nodes $v_1$ and $v_2$, respectively, in round $s$. Thus, $D_s=D'_s=4$ and $i=1$, for both agents in round $s$. If $j=1$, then agents $A_1$ and $A_2$ visit nodes $g'$ and $g$ entering through port 1 from nodes $v_1$ and $v_2$, respectively, in round $s$. Then $D_s=D'_s=2$ and $i=1$, for both agents in round $s$.
\item nodes $v_1$ and $v_2$ are at distance $\kappa\geq 3$ from nodes $f$ and $f'$, respectively. Moreover, nodes $v_1$ and $v_2$ are not adjacent to any of the nodes $g$ and $g'$. We have $D_{s-1}=D'_{s-1}=\kappa$ in round $s-1$. If $j=0$, the agents visit a green node entering through port 1 from nodes $v_1$ and $v_2$, respectively, in round $s$. Thus, $D_s=D'_s=\kappa+1$ and $i=1$, for both agents in round $s$. If $j=1$, the agents visit a green node entering through port 0 from nodes $v_1$ and $v_2$, respectively, in round $s$. Thus, $D_s=D'_s=\kappa-1$ and $i=0$, for both agents in round $s$.
\end{enumerate}
\end{enumerate}
Thus, the agents have the same input, and they are in the same state in round $s$. Moreover, $D_s=D'_s$. 
Hence, for all rounds $t\geq r$, the invariant is proved by induction. 
This proves the lemma.
\end{proof}

We are now ready to show that if agents start simultaneously from the ends of line $L$ then they cannot meet.

\begin{theorem}\label{no meeting}
If the agents start simultaneously at nodes $b$ and $b'$ of line $L$, they can never meet.
\end{theorem}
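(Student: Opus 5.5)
We split the execution into two phases, separated by the round $r$ in which the agents first reach $g$ and $g'$, and show that the agents are at distinct nodes in each phase.

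\emph{Phase before round $r$.} By Lemma \ref{same state round} and Lemma \ref{exit crucial}, the agents reach $c,c'$ simultaneously in the same state and then reach $g,g'$ simultaneously in round $r$. Until an agent first reaches $g$ (resp. $g'$), it stays in the copy $C$ (resp. $C'$). Line $L$ is $C\cup C'$ joined by the edge $\{f,f'\}$, and $f$ lies beyond $g$. So before round $r$ agent $A_1$ is confined to the nodes of $C$ strictly between $b$ and $g$ (inclusive of $b$), and $A_2$ to the corresponding nodes of $C'$. These sets are disjoint, so no meeting is possible before round $r$. Round $r$ itself is also safe, since $g\neq g'$.

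\emph{Phase after round $r$.} We invoke Lemma \ref{lemma distance}: for all $t\ge r$ we have $D_t=D'_t$. We also use Lemma \ref{lemma bound}, as in the proof of Lemma \ref{lemma distance}: after round $r$ the agents never return to $c_3$ and $c'_3$. Hence for $t\ge r$, agent $A_1$ is at a node $p$ between $c_3$ and $c'_3$, and likewise for $A_2$.

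Suppose the agents meet at some node $p$ in a round $t\ge r$. Then the distance from $p$ to $f$ equals the distance from $p$ to $f'$. Since $f$ and $f'$ are adjacent on the line, every node $p$ satisfies $|d(p,f)-d(p,f')|=1$. Thus $D_t\neq D'_t$, contradicting Lemma \ref{lemma distance}. Equivalently, $D_t=D'_t$ forces the agents onto opposite sides of the edge $\{f,f'\}$, at mirror positions, and the distance between such positions is $2D_t+1$, which is odd.

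The main obstacle is making the confinement claims rigorous. The first is that before round $r$ each agent stays on its own side, which requires that the agent does not cross $f$ before first visiting $g$. This holds because $g$ is the unique neighbor of $f$ in $C$. The second is that the invariant of Lemma \ref{lemma distance} really covers every round after $r$, which needs the agents never to re-enter the region at or before $c_3$; this is exactly what Lemma \ref{lemma bound} provides. Once both are in place, the parity argument finishes the proof.
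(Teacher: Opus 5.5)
Your proof is correct and follows the paper's own argument. Before round $r$ you use Lemmas \ref{same state round} and \ref{exit crucial} to show the agents stay in disjoint parts of $C$ and $C'$, and after it you use Lemma \ref{lemma distance} with the adjacency of $f$ and $f'$ to rule out a common node, exactly as the paper does. One small slip: your ``equivalently'' aside is not implied by $D_t=D'_t$ alone, since e.g.\ $A_1$ at $g$ and $A_2$ at $f$ give $D_t=D'_t=1$ on the same side, but that remark is unnecessary because the contradiction $d(p,f)=d(p,f')$ already finishes the proof.
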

\begin{proof}
From Lemma~\ref{exit crucial}, it follows that the agents must visit nodes $g$ and $g'$ for the first time in the same state and in the same round $r$. The agents cannot possibly meet before round $r$ because they explore disjoint parts of line $L$ until round $r$. Suppose that the agents meet in some round $r'>r$ at some node $v^*$. From Lemma~\ref{lemma distance}, it follows that $D_{r'}=D'_{r'}$. Thus, the distance between node $v^*$ and node $f$ is equal to the distance between node $v^*$ and $f'$. This is a contradiction because nodes $f$ and $f'$ are adjacent. This proves the theorem.
\end{proof}

\begin{corollary}
There does not exist any RV-Universal DFA even for the class of all instances whose underlying graph is a finite line, if the automaton modelling  the agents is initially equipped with finitely many stationary pebbles. 
\end{corollary}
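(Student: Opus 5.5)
The corollary follows by combining Theorem~\ref{no meeting} with a feasibility argument for the constructed instance $(L,b,b')$. Fix an arbitrary DFA $A$ with $k$ states and finitely many ($\rho$) stationary pebbles.

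\emph{Reduction to a restricted class of automata.} If $A$ visits some node of an infinite line more than $k$ times, then by Lemma~\ref{finitely} every trajectory stays within distance $\mu$ of that node. Placing the two bases on a sufficiently long finite oriented line, far apart, gives disjoint visited sets, so the agents never meet. Such an instance is feasible by the argument of Proposition~\ref{prop1}: an automaton with enough memory can learn its distance to the nearer leaf, and the asymmetric placement lets the agents choose a common meeting leaf. So we may assume $A$ visits every node of an infinite line at most $k$ times. Lemma~\ref{lemma bound} then gives the bound $\beta$ and the free half-line, and, up to symmetry, $A$ is right-sided. The number $x$ is defined as above.

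\emph{No rendezvous.} Build the line $L$ with red nodes $c,c_1,c_2,c_3,g$ (and primed copies), blue nodes $f,f'$, and starting nodes $b,b'$, as described. Under simultaneous start, Theorem~\ref{no meeting} says the copies of $A$ never meet. Hence $A$ fails on $(L,b,b')$.

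\emph{Feasibility of $(L,b,b')$.} This is the remaining step, and the main obstacle, because it is the only part not already proved. I would show that some DFA $B$, tailored to this instance, achieves rendezvous for every delay. The key asymmetry is the red nodes: $c_2$ lies at distance $z+3\beta x$ from $b$, while $c'_2$ lies at distance $z+4\beta x$ from $b'$. Red nodes are detectable locally, since port $0$ leads away from the agent's own endpoint there instead of toward it.

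The automaton $B$ hardcodes the whole line $L$ (its length and red positions), which is finite. Starting at a leaf, an agent walks along the line, reading which port it entered by at each node, and records the distance at which the second red node appears. This tells it whether its base is $b$ or $b'$. The agent starting at $b$ then walks to $b'$ and stops forever; the agent starting at $b'$ stays at $b'$ forever. Only finitely many distances must be counted, so $B$ is a DFA.

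Rendezvous holds regardless of delay. The waiting agent is at $b'$ from its start onward. The walking agent reaches $b'$ and stays, so from the moment both have settled they share node $b'$. One should also check that no pebbles are needed and that an early-started agent reaching $b'$ before the other agent appears is harmless: in that case it simply waits at $b'$ until the second agent starts there. So the instance is feasible, $A$ fails on it, and since $A$ was arbitrary, no RV-universal DFA exists for lines with finitely many stationary pebbles.
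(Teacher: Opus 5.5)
Your overall route is the paper's: non-meeting comes from Theorem~\ref{no meeting}, and feasibility comes from an instance-specific automaton. That automaton explores enough of $L$ to tell $b$ from $b'$ via the asymmetry between $c_2$ (at distance $z+3\beta x$ from $b$) and $c'_2$ (at distance $z+4\beta x$ from $b'$), after which both agents go to a common leaf; the paper uses $b$, you use $b'$. Your explicit reduction via Lemma~\ref{finitely}, including feasibility of the discarded case, is a reasonable addition. The paper handles that only in the text before the construction.

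There is, however, an off-by-one error in your identification step as written. Walking from a leaf, the first two red nodes met are $c,c_1$ (resp.\ $c',c'_1$), at distances $z+\beta x$ and $z+2\beta x$ in \emph{both} copies. So recording the distance at which the second red node appears does not distinguish $b$ from $b'$. The subscript in $c_2$ counts red nodes after $c$, so $c_2$ and $c'_2$ are the \emph{third} red nodes from the leaves. It is the distance to the third red node ($z+3\beta x$ versus $z+4\beta x$) that must be recorded, or you can simply compare the full hardcoded red pattern.

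A smaller slip: the agent based at $b'$ is not at $b'$ ``from its start onward.'' Like the other agent, it must first walk out to $c'_2$ and then return before it knows to stay. The correct argument is that each agent eventually settles at $b'$ forever. If the $b$-agent arrives while the other is still dormant at $b'$, they meet at once. Either way rendezvous occurs for every delay. With these two easy fixes your argument coincides with the paper's.
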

\begin{proof}
Consider agents modeled by a DFA  $A$ equipped with finitely many stationary pebbles.
Consider the instance $(L, b,b')$, where $L$ is the line constructed above for the automaton $A$ (cf. Fig. \ref{g2}). The agents  modeled by $A$ start simultaneously from nodes $b$ and $b'$. From Theorem~\ref{no meeting}, it follows that the agents can never meet. 

It remains to show that the instance $(L,b,b')$ is feasible. If the agents are modeled by an automaton with a sufficient number of states, then they can explore and remember the entire line $L$. Thus, the agents can identify the difference in the distances from nodes $ b$ and $ b'$ to nodes $ c_2$ and $ c'_2$, respectively, and hence they can distinguish between nodes $b$ and $b'$. Then both agents go to $b$  and stop. 
\end{proof}

\section{RV-Universal Automata with Movable Pebbles in Trees}

In this section we prove that there exists an RV-universal DFA for the class of all instances whose underlying graph is a tree, in the scenario where each of the agents is initially equipped with a single movable pebble. In other words, we prove the existence of a single automaton that guarantees rendezvous for any feasible instance whose underlying graph is a tree, when identical copies of this automaton start with an arbitrary delay, each equipped with a single movable pebble.

As usual in the literature on automata navigating in graphs, 
we will describe the behavior of the agents by designing an algorithm that can be executed by a 
deterministic finite automaton, rather than formally describing the construction of an automaton by defining its output and state transition functions.

Our algorithm is based on the notion of a {\em basic walk} starting at a given node $w$ of the tree. This is the following infinite procedure {\tt BW}$(w)$. The agent starts at node $w$ and takes port 0. In each round, if the agent entered the current node $z$ of degree $d$ by port $j$, then it exits it taking port $(j+1) \mod d$.
Starting at any node $w$ of an $n$-node tree, and executing $2(n-1)$ steps of {\tt BW}$(w)$, the agent visits all nodes of the tree, traverses each edge exactly twice, and gets back to node $w$.

We will use the procedure {\tt ShiftPebble} that will be called during the execution of a basic walk, when the agent is at a pebble. Thus the agent is at some node $y$ and knows which port number $p$ it should take to get  to the next node in the basic walk.\\

{\bf Procedure}  {\tt ShiftPebble}
\nopagebreak

$pick$; take port $p$; $drop$;\\

Throughout this section, $G$ denotes any tree.
We will use the following notion. 
An instance $(G,u,v)$ is called {\em automorphic}, if there is a port preserving automorphism of $G$ carrying $u$ on $v$. More formally, $(G,u,v)$ is automorphic, if there exists a bijection $f: G \longrightarrow G$, such that $f(u)=v$, nodes $x$ and $y$ are adjacent if and only if nodes $f(x)$ and $f(y)$ are adjacent, and if $p$ (resp. $q$) is the port number at $x$ (resp. $y$) corresponding to edge$\{x,y\}$ then $p$ (resp. $q$) is the port number corresponding to edge$\{f(x),f(y)\}$ at $f(x)$ (resp. at $f(y)$).

The main idea of our algorithm is based on the following lemmas.

\begin{lemma}\label{lem1}
 If the instance $(G,u,v)$ is feasible then it is not automorphic.
\end{lemma}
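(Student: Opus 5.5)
We prove the contrapositive: if $(G,u,v)$ is automorphic, then no DFA accomplishes rendezvous when its copies start at $u$ and $v$ simultaneously (zero delay is allowed, since the delay is arbitrary). Fix a port-preserving automorphism $f$ with $f(u)=v$, and any DFA $\cal A$, with agents $A_1$ at $u$ and $A_2$ at $v$, both starting in round $0$. Let $x_t,y_t$ be their positions in round $t$, $S_t,S'_t$ their states, and $P_t,P'_t$ the sets of nodes holding pebbles dropped by $A_1$ and by $A_2$, respectively.

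By induction on $t$ we show the invariant: $S_t=S'_t$, $y_t=f(x_t)$, both agents carry the same number of pebbles, and $P'_t=f(P_t)$. In round $0$ this holds because both agents are in $S_0$, carry $\rho$ pebbles, and $f(u)=v$. For the inductive step, $f$ preserves degrees, so $x_t$ and $y_t$ have equal degrees. The last entry ports are equal, because $f$ preserves port labels and carries the previous edge traversed by $A_1$ onto the one traversed by $A_2$.

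The subtle part is the pebble bit, since pebbles of both agents are identical and mixed. The relevant set of occupied nodes is $P_t\cup P'_t$. We claim this set is $f$-invariant, but this would need $f(P'_t)=P_t$, i.e.\ an involutive behaviour, which is not automatic. The way out is to note that as long as the agents have not met, $x_t\neq y_t$, and the pebble bit at $x_t$ is $[x_t\in P_t\cup P'_t]$ while at $y_t$ it is $[f(x_t)\in f(P_t)\cup P'_t]$. To handle this cleanly, we use that $G$ is a tree: a port-preserving automorphism of a finite tree with $f(u)=v\neq u$ fixes the center, which is a node or an edge. If the center were a node $c$ fixed by $f$, then $f$ would fix all ports at $c$. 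Since port-preserving automorphisms fixing a node and its ports are the identity on a connected graph, $f$ would be the identity, contradicting $u\neq v$. Hence the center is an edge $\{c,c'\}$ swapped by $f$, so $f^2$ fixes $c$ and its ports and is therefore the identity. Thus $f$ is an involution, and $f(P'_t)=f(f(P_t))=P_t$. Therefore $P_t\cup P'_t$ is $f$-invariant, and the pebble bits at $x_t$ and $y_t=f(x_t)$ coincide.

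With equal inputs and equal states, $\delta$ and $\lambda$ produce equal next states and equal outputs. The drop and pick actions succeed or fail identically, because the pebble bits and the carried counts agree. Taking the same port at $x_t$ and $f(x_t)$ leads to $f(x_{t+1})$, and the invariant propagates, with movable pebbles being picked from the same kind of location.

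Finally, rendezvous would require $x_t=y_t=f(x_t)$ for some $t$. But $f$ swaps the central edge and is an involution without fixed points: any fixed node would force, by the same rigidity argument along the path to the center, that $f$ is the identity. So no such round exists, and the agents never meet, whatever DFA and number of pebbles are used. The main obstacle is the involution and fixed-point-freeness argument for tree automorphisms, which is where the tree hypothesis is used to justify the symmetry of the shared pebble configuration.
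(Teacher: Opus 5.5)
Your proof is correct, and its skeleton is the paper's. Both start the agents simultaneously and show inductively that $A(v)$ is at $f(x)$, in the same state and with the same input, whenever $A(u)$ is at $x$.

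Two things differ.

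\begin{itemize}
\item \textbf{Pebbles.} The paper's induction is silent about pebbles. You instead maintain the symmetry of the joint pebble configuration, notice that this requires $f^2=\mathrm{id}$, and obtain it from the tree center.
\item \textbf{Excluding a meeting.} You use that a nontrivial port-preserving automorphism has no fixed point. The paper argues locally at the first meeting round: both agents take the same port $a$ from distinct nodes $y$ and $f(y)$, so they would enter the common node $z$ by the same port $b$, which is impossible. This is your rigidity fact applied to a single edge.
\end{itemize}

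What each route buys: the paper's is shorter and never uses that $G$ is a tree, but it tacitly treats the automaton as pebble-free. Yours covers feasibility witnessed by DFAs with any number of stationary or movable pebbles, and there the tree hypothesis is genuinely needed. For instance, take the oriented ring with $3$ nodes (port $1$ clockwise at every node). The rotation is a port-preserving automorphism of order $3$. Yet the rule ``drop the pebble, take port $1$, then wait forever if the node reached holds a pebble, and otherwise keep taking port $1$'' achieves rendezvous from any two distinct nodes under any delay. So this automorphic instance is feasible with one pebble.

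Three small repairs are needed.

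\begin{itemize}
\item The pebble bit at $y_t$ should read $[f(x_t)\in P_t\cup P'_t]$.
\item Attributing pebbles to their dropper gets murky once an agent picks up the other's pebble. It is cleaner to track the whole occupied set $Q_t$ with the invariant $f(Q_t)=Q_t$. Simultaneous drops or picks at $x_t$ and $f(x_t)$ preserve it, because $\{x_t,f(x_t)\}$ is $f$-invariant when $f^2=\mathrm{id}$.
\item Dismiss explicitly the case where $f$ fixes both ends of the central edge. Then $f$ is the identity by the same rigidity argument.
\end{itemize}
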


\begin{proof}
Consider a feasible instance $(G,u,v)$ and suppose that it is automorphic.
Let $\cal A$ be a rendezvous algorithm for $(G,u,v)$ and
let $f: G \longrightarrow G$ be a port preserving automorphism of $G$ carrying $u$ on $v$. Consider agents $A(u)$ and $A(v)$ simultaneously starting at bases $u$ and $v$ respectively and executing algorithm $\cal A$. Then in some round they must enter the same node $z$. Let $r$ be the first round when this happens. Since agents execute the same algorithm, by induction on round number $r'<r$, if agent $A(u)$ is at node $x$ in round $r'$ then agent $A(v)$ is at node $f(x)$ in round $r'$. In round $r-1$ they are at distinct nodes $y$ and $f(y)$.
In round $r$ they both take the same port $a$ and hence must enter node $z$ by the same port $b$. 
This is a contradiction, because it is impossible to enter a node from distinct nodes by the same port.
\end{proof}

\begin{lemma}\label{lem2}
Let $(G,u,v)$ be a non-automorphic instance. Consider basic walks {\tt BW}$(u)$ and {\tt BW}$(v)$.
Then there exists some step $i$, such that either the nodes visited in step $i$ in these walks have different degrees, or they have the same degree but the port numbers by which the agents enter these nodes in step $i$ are different.
\end{lemma}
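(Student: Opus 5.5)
We argue by contraposition: assume that for every step $i$ the nodes visited in step $i$ of {\tt BW}$(u)$ and {\tt BW}$(v)$ have the same degree and are entered by the same port number. From this we build a port preserving automorphism of $G$ carrying $u$ on $v$, contradicting the assumption that $(G,u,v)$ is non-automorphic. The guiding observation is that a basic walk in a tree is completely determined by the local data it sees. The exit port at each step is $(j+1)\bmod d$, where $j$ is the entry port and $d$ the degree of the current node. So if the two walks see the same degree and entry port at every step, they also take the same exit port at every step.

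First, let $x_0=u,x_1,x_2,\dots$ and $y_0=v,y_1,y_2,\dots$ be the node sequences of {\tt BW}$(u)$ and {\tt BW}$(v)$. Let $n$ be the number of nodes, and restrict attention to the first $2(n-1)$ steps, during which {\tt BW}$(u)$ visits every node and traverses every edge exactly twice. Define $f(x_i)=y_i$.

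The main obstacle is showing that $f$ is well defined, i.e.\ that $x_i=x_{i'}$ implies $y_i=y_{i'}$. For this, use the tree structure. The walk is a depth-first traversal, so $x_i=x_{i'}$ exactly when the subsequence of moves between steps $i$ and $i'$ is a closed walk, that is, when every edge taken forward is later retraced. Each move out of $x_s$ is described by a pair (exit port at $x_s$, entry port at $x_{s+1}$), and these pairs agree in both walks by hypothesis. In a tree, a walk returns to its starting node iff its sequence of traversed edges reduces to the empty word under cancelling an edge followed immediately by its reverse. An edge traversal followed by the reverse traversal is recognizable purely from port pairs: the second move has exit port equal to the previous entry port (and then automatically returns to the previous node). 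Hence the free reduction of the move sequence between steps $i$ and $i'$ is the same in both walks, so $x_i=x_{i'}$ iff $y_i=y_{i'}$.

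This gives that $f$ is well defined and injective on the nodes of $G$. Since {\tt BW}$(u)$ visits all $n$ nodes and the image has the same cardinality, $f$ is a bijection. Adjacency and ports are preserved: every edge $\{x,z\}$ of $G$ is traversed by {\tt BW}$(u)$ as some step $x_s=x$, $x_{s+1}=z$, with ports $p$ at $x$ and $q$ at $z$. The matching step of {\tt BW}$(v)$ goes from $y_s=f(x)$ to $y_{s+1}=f(z)$ through port $p$ and enters by port $q$. The converse direction follows from bijectivity together with the symmetric argument applied to {\tt BW}$(v)$, which likewise covers all edges. So $f$ is a port preserving automorphism with $f(u)=v$, contradicting non-automorphism.
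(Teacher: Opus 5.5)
Your proof is correct. Its skeleton matches the paper's: contraposition, the map $f(x_i)=y_i$, and adjacency and port preservation read off from matching steps. The key step is done differently, though.

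The paper proves by induction that exit ports agree and that the sets of ports already used at corresponding nodes $u_j,v_j$ coincide. It then gets injectivity from the remark that the port $p$ by which a node $b$ is first reached from $b'$ is also used for the first time at $f(b')$ in {\tt BW}$(v)$, ``hence'' $f(b)$ is new. Well-definedness of $f$ is never argued separately there. That last inference also tacitly needs $p$ not to be the port to the parent of $f(b')$, since a return move to the parent also uses a port for the first time.

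Your free-reduction argument gives $x_i=x_{i'}\iff y_i=y_{i'}$ directly, so it settles well-definedness and injectivity at once. It uses nothing about the walk beyond the fact that it covers the tree in $2(n-1)$ steps. You also make surjectivity and the converse direction of adjacency explicit, which the paper leaves implicit.

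One detail is worth stating: cancelling a backtrack leaves a composable sequence, because the removed piece is a closed walk. So the criterion ``exit port of the later move equals entry port of the earlier one'' still applies to moves that become adjacent only after earlier cancellations. This is why the two reductions proceed identically on the common sequence of port pairs.
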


\begin{proof}
The proof is by contraposition. Assume that the conclusion of the lemma does not hold,  i.e., in each step $i$, the following condition $C_i$ is satisfied: the nodes $u_i$ and $v_i$ visited in step $i$ by the basic walks {\tt BW}$(u)$ and {\tt BW}$(v)$, respectively, have equal degree, and port numbers by which the agents enter these nodes in step $i$ are equal. We define $u_0=u$ and $v_0=v$.
We will prove that the instance $(G,u,v)$ is automorphic, by showing that the function $f$ defined by $f(u_i)=v_i$, for all $i\geq 0$, is a port-preserving automorphism of $G$ carrying $u$ on $v$.

We prove the following invariant by induction on $i>0$.
\begin{itemize}
\item
the port number by which the agent leaves node $u_{i-1}$ in the $i$th step of the basic walk {\tt BW}$(u)$ is equal to
the port number by which the agent leaves node $v_{i-1}$ in the $i$th step of the basic walk {\tt BW}$(v)$;
\item
the set of port numbers at node $u_j$, for any $ j\leq i$, used by the agent until step $i$ of {\tt BW}$(u)$ is equal to
the set of port numbers at node $v_j$, for any $ j\leq i$, used by the agent until step $i$ of {\tt BW}$(v)$.
\end{itemize}
The invariant holds for $i=1$ by the fact that a basic walk starts by taking port 0, and by condition $C_1$.
Suppose that the invariant holds for some $i>0$. By condition $C_i$, the agent entered $v_i$ in step $i$ of {\tt BW}$(v)$ by the same port $q$
as it entered $u_i$ in step $i$ of {\tt BW}$(u)$. Since the degrees of $u_i$ and $v_i$ are the same, call them $d$, then the agent
leaves $u_i$ in step $i$ of {\tt BW}$(u)$ by port $(q+1) \mod d$, and  leaves $v_i$ in step $i$ of {\tt BW}$(v)$ also by port $(q+1) \mod d$.
These port numbers must be equal.

Since the exit port numbers from nodes $u_i$ and $v_i$ in step $i$ of {\tt BW}$(u)$ and {\tt BW}$(v)$ are the same, and the entry port numbers to nodes $u_{i+1}$ and $v_{i+1}$ in step $i+1$ of {\tt BW}$(u)$ and {\tt BW}$(v)$ are the same, by condition $C_{i+1}$, the second part of the invariant is satisfied  for $i+1$. This proves the invariant by induction.

Now we are ready to prove that $f$ is a port-preserving automorphism. Suppose that nodes $a$ and $b$ are adjacent and that $p$ is the port number at $a$ and $q$ is the port number at $b$ corresponding to edge $\{a,b\}$. Then edge $\{a,b\}$ is traversed from $a$ to $b$ in some step $i$ of {\tt BW}$(u)$. By the invariant and by condition $C_i$, the agent leaves $f(a)$ by port $p$ and enters $f(b)$ by port $q$ in step $i$ of {\tt BW}$(v)$. Thus nodes $f(a)$ and $f(b)$ are adjacent and corresponding ports are preserved.

It remains to show that $f$ is one-to-one. Suppose that  $a$ and $b$ are two distinct nodes of the tree. Suppose that $b$ is visited for the first time later than $a$ in {\tt BW}$(u)$ and suppose this happens in step $i$ of  {\tt BW}$(u)$. Then, in step $i$ of  {\tt BW}$(u)$, the agent takes a port $p$ at some node $b'$ to reach $b$. By the invariant and by condition $C_i$, in step $i$ of  {\tt BW}$(v)$, the agent takes port $p$ at  node $f(b')$ to reach $f(b)$. Since port $p$ is used for the first time at $b'$ in  {\tt BW}$(u)$, and the set of ports at $b'$ used in  {\tt BW}$(u)$ until step $i-1$ is equal to the set of ports at $f(b')$ used in  {\tt BW}$(v)$ until step $i-1$, it follows that port $p$ is used for the first time at $f(b')$ in  {\tt BW}$(v)$. Hence $f(a)\neq f(b)$.

The fact that $f(u)=v$ follows from the definition. Thus $f$ is a port-preserving automorphism carrying $u$ on $v$, and hence
the instance $(G,u,v)$ is automorphic.
\end{proof}

\subsection{High-level idea of the algorithm}

The high-level idea of our algorithm is as follows. Consider any feasible instance $(G,u,v)$.
By Lemmas \ref{lem1} and \ref{lem2} we know that there exists some step $i$, such that either the nodes visited in step $i$ in the walks {\tt BW}$(u)$ and {\tt BW}$(v)$ have different degrees, or they have the same degree but the port numbers by which the agents enter these nodes in step $i$ are different. Call the above condition an {\em accident}.
Hence, if each agent executes the procedure {\tt BW}$(u)$ (resp. {\tt BW}$(v)$) then an accident must happen in some step. 

Denote by $A(u)$ the agent with base $u$ and by $A(v)$ the agent with base $v$.
At the beginning, each agent drops its pebble at its base.
The algorithm proceeds in phases following the steps of the procedure {\tt BW}. In the $j$th phase, each agent shifts its pebble to the node visited by this agent in step $j$ of its basic walk. The crucial part of the phase consists in visiting both pebbles by both agents, trying to verify if there is an accident in the current phase. Determining if node degrees are different is straightforward, but special care must be taken to determine if the entry port numbers are different. However, the main difficulty is to ensure that at the visit of the pebble of the other agent in phase $j$, this pebble it is still at the node visited by the other agent in step $j$ of its basic walk. This is not straightforward, due to possible initial delay between agents and due to the unknown time of the trip to the other pebble.
Once an accident is discovered by the agents, they can break symmetry considering the lexicographic order
on the couples $(d,q)$, where $d$ is the degree of the node involved in the accident, and $q$ is the entry port number to this node. After breaking symmetry, one agent stays at its pebble and the other completes a basic walk starting from its pebble and meets the idle agent.

\subsection{Detailed description of the algorithm}

We present our algorithm for the automaton $A(u)$. The algorithm runs in phases. In the preliminary phase, $A(u)$ drops its pebble at its base $u$, and executes {\tt BW}$(u)$ without moving the pebble. Note that an automaton cannot remember where its base is located. 
If $A(u)$ starts to move by following its initial basic walk, it will periodically encounter nodes with a pebble. 
The agents can start with an arbitrary delay $\delta$. If $\delta\geq 2(n-1)$, where $n$ is the number of nodes in the tree, then agents will meet by round $2(n-1)$ of the earlier agent because the other agent will be still dormant at its base when the earlier agent gets to this base. Hence, we may assume that $\delta<2(n-1)$.
If $deg(u)\neq deg(v)$, then $A(u)$ identifies the difference at the first visit of a node with a pebble that has different degree than $u$.  In such a case, if the degree of this node is larger than $deg(u)$, then $A(u)$ waits at this node. If the degree of this node is smaller than $deg(u)$, then $A(u)$
continues its initial basic walk until it sees a node of degree $deg(u)$ with a pebble.
This way, the agents will meet in the preliminary phase, at the base of larger degree.
 Otherwise (when $deg(u)=deg(v)$), $A(u)$ terminates its preliminary phase when it encounters a node with a pebble $2deg(u)+1$ times during the execution of {\tt BW}$(u)$.

If there is no rendezvous in the preliminary phase, then for all $j\geq 1$, $A(u)$ starts its phase $j$ at its pebble. In particular, it starts phase 1 at its base $u$.
$A(u)$ performs the following actions in phase $j$:
\begin{enumerate}
\item Suppose $u_{j}$ denotes the current position of the pebble. Note that in phase 1, $u_{j}=u$. $A(u)$ executes {\tt BW}$(u)$. $A(u)$ records the outgoing port number from node $u_j$ in this walk (say $p$), and the entering port number to the next node (say node $u_j'$) reached through port $p$ from node $u_j$. Call this entering port number $q$. Consider the next node with a pebble in the basic walk {\tt BW}$(u)$ from which the agent exits by port $p$.
Call this node $y$.
Let $z$ be the node entered from $y$
 by taking port $p$ during the execution of {\tt BW}$(u)$. Suppose $r$ denotes the entering port number to node $z$ in {\tt BW}$(u)$. If $q<r$, then $A(u)$ waits at node $z$ forever. If 
$q>r$, then $A(u)$ continues {\tt BW}$(u)$ until reaching a node with a pebble for the $(2\cdot deg(u_j)+1)^{th}$ time, and waits there forever. If $q=r$, then $A(u)$ continues the execution of {\tt BW}$(u)$ until visiting a node with a pebble for the $2\cdot (2deg(u_j)+1)^{th}$ time. During the execution of {\tt BW}$(u)$, if $A(u)$ visits a non-leaf node $\nu$ between two consecutive visits of a node with a pebble, then $A(u)$ goes to node $\nu$ and waits there forever.

\item $A(u)$ shifts its pebble to the next node in {\tt BW}$(u)$. Note that  $u_j'$ denotes the new position of the pebble of $A(u)$. $A(u)$ continues executing {\tt BW}$(u)$. Suppose $t_1$ denotes the round in which $A(u)$ shifts its pebble. Let $w$ be the first visited node with a pebble such that $deg(w)\neq deg(u_j')$ after round $t_1$. If $deg(u_j')<deg(w)$, then $A(u)$ waits at the node $w$ forever.
If $deg(u_j')>deg(w)$, then $A(u)$ continues {\tt BW}$(u)$ until reaching a node of degree $deg(u_j')$ with a pebble, and waits there forever. 
If a node with a pebble of a degree different than $deg(u’_j)$ is not
encountered, then the agent continues execution of basic walk {\tt BW}$(u)$ until
visiting a node with a pebble for the
$2\cdot(2deg(u’_j)+ 1)^{th}$ time after round $t_1$.
Then it terminates phase $j$ and starts phase $j+1$.
\end{enumerate}
Note that $A(u)$ starts its phase $j\geq 1$ at the node visited in step $j-1$ of its basic walk {\tt BW}$(u)$ and ends this phase at the node visited in step $j$ of its basic walk {\tt BW}$(u)$. The algorithm is interrupted as soon as the agents meet.
The pseudocode of the algorithm, executed by an agent  $A(u)$ is presented below.

\begin{algorithm}
\footnotesize

\tcp{** Preliminary Phase **//}
Drop a pebble at $u$ and record $deg(u)$\;
Execute {\tt BW}$(u)$ until visiting a node $v$ with a pebble of degree different than $deg(u)$ or until visiting a node $v$ with a pebble  $2deg(u)+1$ times\;
\uIf{$deg(u)<deg(v)$}
{Wait at $v$ forever\;
}
\If{$deg(u)>deg(v)$}
{Continue {\tt BW}$(u)$ until reaching a node of degree $deg(u)$ with a pebble, and wait there forever;}
\tcp{** End of Preliminary Phase **//}
{\For {$j=1,2,\ldots$}
{
\tcp{** Phase $j$ **//}

$u_{j}:=$ the current position of the pebble; \tcp{** In Phase 1, $u_{j}=u$ **//}
Execute {\tt BW}$(u)$\;
 
$p:=$ the outgoing port from node $u_j$ during the execution of {\tt BW}$(u)$\; 
$u'_j:=$ the node entered by taking port $p$ from node $u_j$\;
 $q:=$ the entering port to node $u'_j$\;
  
 $y:=$ the next node with a pebble in the basic walk {\tt BW}$(u)$ from which the agent exits by port $p$\;
$r:=$ the entering port number to node $y$\;
 \uIf{$q<r$}
 {Wait at node $y$ forever\;} 
 \If{$q>r$}{Continue {\tt BW}$(u)$ until visiting a node with a pebble for $(2\cdot deg(u_j)+1)$ times, and wait there forever\;}
 Continue the execution of {\tt BW}$(u)$ until visiting a node with a pebble for $2\cdot (2deg(u_j)+1)$ times\;
 \uIf {there is a single non-leaf node $\nu$ between two consecutive visits of a node with a pebble during the execution of {\tt BW}$(u)$ }
 {go to node $\nu$ and wait forever}
 \Else
 {Execute {\tt ShiftPebble}\;}
$u'_j:=$  the new position of the pebble\;
  $t_1:=$ the round in which $A(u)$ shifts its pebble\;
 Continue the execution of  {\tt BW}$(u)$ until visiting a node $w$ with a pebble, of degree different than $deg(u_j')$ or until visiting a node $w$ with a pebble $2\cdot(2deg(u’_j)+ 1)$ times\;
 \uIf{$deg(u'_j)<deg(w)$}
{Wait at the node $w$ forever\;
}
\If{$deg(u_j')>deg(w)$}
{Continue {\tt BW}$(u)$ until visiting a node of degree $deg(u'_j)$ with a pebble, and wait there forever;}

}
}  
     
\caption{{\tt RendezvousAlgorithm} for agent $A(u)$}
\label{algo1}
\end{algorithm}

\subsection{Proof of correctness}

An agent starting at any node $w$ of an $n$-node tree, and executing $2(n-1)$ steps of {\tt BW}$(w)$, visits all nodes of the tree, traverses each edge exactly twice, and gets back to node $w$. However, the agent cannot possibly identify a complete traversal of the tree for unknown $n$. Since an agent is equipped with a pebble, if it drops its pebble, and executes $2(n-1)$ steps of {\tt BW}$(w)$ under the assumption that there are no other pebbles, then it could identify a complete traversal of the tree by counting the number of times it visits a node with a pebble, which is $2deg(w)$. As another agent is also navigating in the network, and will also drop a pebble at some node $v\neq w$, a complete traversal of the tree is identified by $deg(w)+deg(v)+1$ visits of a node with a pebble. If a node with a pebble of a degree different than $deg(w)$ is not encountered, i.e., $deg(w)=deg(v)$, then an agent can identify a complete traversal of the tree by counting $2 deg(w)+1$ visits of a node with a pebble. 

The following lemma ensures that at the end of each phase $j\geq 1$ by the earlier agent, each agent's pebble is located at the node visited in step $j$ of its  basic walk. 

\begin{lemma}\label{samenode}
For each $j\geq 1$, at the end of phase $j$ of the earlier agent, each agent's pebble is located at the node visited in step $j$ of its basic walk.
\end{lemma}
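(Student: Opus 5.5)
We prove the statement by induction on $j$, using a timing invariant that ties the two agents' phases together. Let $A(u)$ be the earlier agent and $\delta<2(n-1)$ the delay; larger delays already give rendezvous in the preliminary phase, as observed in the description. We also assume there is no rendezvous and no accident by phase $j$: otherwise some agent waits forever and the statement is vacuous, since the algorithm is interrupted at meeting.

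The invariant is that the two agents execute phase $j$ ``almost simultaneously''. Precisely, the later agent $A(v)$ starts phase $j$ no later than the earlier agent ends phase $j-1$ plus a bounded lag, and $A(v)$ shifts its pebble in phase $j$ only after $A(u)$ has completed the verification part (item 1) of phase $j$. The lag is bounded by the time of one full traversal, $2(n-1)$ rounds.

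\textbf{Phase length.} The key quantitative fact is that the stopping rule ``visit a pebbled node $2(2d+1)$ times'' corresponds to exactly two full traversals of the tree. Here $d$ is the common degree of the two pebbled nodes, which must be equal, since otherwise an accident would be detected. During one full traversal of {\tt BW} an agent meets its own pebble $d$ times and the other pebble $d$ times. The extra $+1$ accounts for the moment when one of the pebbles is being shifted, which can add or remove a single encounter.

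\textbf{Inductive step.}
\begin{enumerate}
\item Assume that at the start of phase $j$ of $A(u)$ both pebbles sit at $u_{j-1}$ and $v_{j-1}$.
\item Show that $A(u)$ spends at least one full traversal in item 1 and then shifts its pebble to $u_j$.
\item Show that $A(v)$, lagging by fewer than $2(n-1)$ rounds, is still in item 1 or earlier during that time, so it has not yet moved its pebble and the comparison of $q$ with $r$ is made against the correct node.
\item By symmetric counting, $A(v)$ shifts its pebble to $v_j$ before $A(u)$ finishes its final $2(2d+1)$ pebble visits in item 2.
\end{enumerate}
Hence at the end of phase $j$ of $A(u)$ both pebbles are at step-$j$ nodes. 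The base case is phase 1: the preliminary phase ends after one traversal, with pebbles at $u$ and $v$, and the same lag bound applies.

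\textbf{Main obstacle.} The hard part is the precise counting of pebble encounters when the other agent's pebble moves during a traversal. The shift moves the pebble to the adjacent node along {\tt BW}, and it must be shown that this changes the count seen by the observer by at most one. It must also be shown that the observer never sees the other pebble ``twice in a row'' in a way that breaks the correspondence between counts and full traversals.

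I would handle this by case analysis on the relative position of the observer's walk and the edge along which the other pebble is shifted. The case where the pebble slides just ahead of the observer along the same edge is exactly the situation the rule about waiting at a non-leaf node $\nu$ between two consecutive pebble visits is designed for. This yields the bound of at most one spurious or missed encounter per traversal, which the $+1$ in the thresholds absorbs.
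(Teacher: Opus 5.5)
Your route matches the paper's: induction on $j$, with the delay $\delta<2(n-1)$ carried unchanged from phase to phase, concluding that the later agent shifts before the earlier one finishes. The paper closes each step by pure round counting. It takes every phase to last exactly $2\cdot 2(n-1)+1$ rounds up to the shift and $2\cdot 2(n-1)$ rounds after it. So the delay in every phase is the same $\delta$, and $t'_2=t_1+\delta+2\cdot 2(n-1)+1<t_1+4\cdot 2(n-1)+1=t_3$. Your steps 2--4 restate these conclusions (``$A(v)$ is still in item 1'', ``by symmetric counting $A(v)$ shifts before\dots'') without deriving them. You instead rest them on a counting claim that does not hold.

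The failing step is your resolution of the main obstacle, for two reasons.

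(i) A rule that stops at the $k$th pebble encounter cannot absorb a miscount. One extra or missing encounter moves the stopping round to a neighbouring pebble visit, possibly many rounds away. Since pebbles are identical, that visit may even be at the other agent's pebble. This changes the phase length, and hence the delay in phase $j+1$, which is exactly what your invariant must control. In the paper, $2\deg(w)+1$ visits are meant to mark one complete traversal exactly; the $+1$ is not a tolerance.

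(ii) The perturbation is not bounded by one. Both agents follow the same cyclic Euler tour. So it can happen that, in the round in which $A(v)$ carries its pebble from $x=v_{j-1}$ to $y=v_j$, the observer $A(u)$ traverses the same edge from $y$ to $x$. Then $A(u)$ has just explored the $y$-side while the pebble was at $x$, and next explores the $x$-side while it is at $y$. In that whole traversal it sees the other pebble $0$ times rather than $\deg(x)$ or $\deg(y)$. Conversely, an observer trailing $A(v)$ can count the pebble at both its old and its new position within one traversal.

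So no case analysis can give ``at most one spurious or missed encounter per traversal''. What is missing is an argument that the phase lengths in rounds, and hence the delay, survive the other pebble's move. The paper's computation simply takes these fixed lengths for granted. Your proposal rightly notices that this needs justification, but it does not supply one.
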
 
\begin{proof} 
The proof is by induction on $j$. Without loss of generality, assume that $A(u)$ is the earlier agent.\\
\textit{Base Case:} Let $t_1$ and $t'_1=t_1+\delta$ denote the rounds in which agents $A(u)$ and $A(v)$ start phase $j=1$, respectively. At the beginning of phase 1, the pebble of $A(u)$ (resp. the pebble of $A(v)$) is located at base $u$ (resp. base $v$). Let $t_2$ (resp. $t'_2$) denote the round in which $A(u)$ (resp. $A(v)$) shifts its pebble to the node visited in step $j$ of its basic walk. Let $t_3$ (resp. $t'_3$) denote the round in which $A(u)$ (resp. $A(v)$) terminates phase 1. We have $t'_2=t_1+\delta+2\cdot 2(n-1)+1<t_1+3\cdot 2(n-1)+1$ and $t_3=t_1+2\cdot 2(n-1)+1+2\cdot 2(n-1)=t_1+4\cdot 2(n-1)+1>t'_2$. Thus, the statement is true for $j=1$. \\
\textit{Inductive Hypothesis:} Assume that the statement is true for $j=k$, i.e., at the end of phase $k$ of the earlier agent, each agent's pebble is located at the node visited in step $k$ of its basic walk. \\
\textit{Inductive Step:} Let $u_k$ (resp. $v_k$) denote the node visited in step $k$ of the basic walk of 
 $A(u)$ (resp. $A(v)$). By inductive hypothesis,
 at the beginning of phase $k+1$, $u_k$ (resp.  $v_k$) is the location of the pebble of $A(u)$ (resp. $A(v)$).
Let $t_4$ and $t'_4=t_4+\delta$ denote the rounds in which agents $A(u)$ and $A(v)$ start phase $k+1$, respectively. Let $t_5$ (resp. $t'_5$) denote the round in which $A(u)$ (resp. $A(v)$) shifts its pebble to the node visited in step $k+1$ of its initial basic walk. Let $t_6$ (resp. $t'_6$) denote the round in which $A(u)$ (resp. $A(v)$) terminates phase $k+1$. We have $t'_5=t_4+\delta+2\cdot 2(n-1)+1<t_4+3\cdot 2(n-1)+1$ and $t_6=t_4+2\cdot 2(n-1)+1+2\cdot 2(n-1)=t_4+4\cdot 2(n-1)+1>t'_5$. Thus, the statement holds for $j=k+1$.
\end{proof}

During the execution of {\tt RendezvousAlgorithm}, an agent shifts its pebble in each phase $j\geq 1$ by following its basic walk. If one of the agents starts the execution of {\tt RendezvousAlgorithm} with a delay of $\delta<2(n-1)$, then it will start each phase $j\geq 1$ with the same delay $\delta$. Let $t$ denote the round in which agent $A(u)$ shifts its pebble in phase $j$. When can it be guaranteed that, after round $t$, $A(u)$ visits a node with the other agent's pebble after the pebble's position has been shifted in phase $j$? The following lemma answers this question.

\begin{lemma}\label{visit}
Let $t$ denote the round in which agent $A(u)$ has shifted its pebble in phase $j\geq 1$. Then agent $A(u)$ visits a node with the other agent's pebble after the position of the pebble has been shifted by the other agent in phase $j$, within $4(n-1)$ steps of the basic walk of $A(u)$ in phase $j$ after round $t$.
\end{lemma}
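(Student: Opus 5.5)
We will prove the statement by tracking the timing of the two agents' phases, following the round-counting used in the proof of Lemma \ref{samenode}. We assume $\delta<2(n-1)$ (otherwise rendezvous already happens in the preliminary phase) and that no rendezvous or accident has interrupted the execution. By Lemma \ref{samenode} and the remark following it, both agents execute phase $j$ in lockstep up to the fixed delay $\delta$. Hence if $A(u)$ shifts its pebble in round $t$, then the other agent shifts its pebble in round $t+\delta$ or $t-\delta$, depending on which agent is earlier.

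\textbf{Case 1: $A(u)$ is the later agent.} The other pebble was shifted in round $t-\delta\le t$. From round $t$ on, the other pebble stays at its new node at least until the other agent ends phase $j$. The other agent shifts again only in phase $j+1$, at least $2(2(n-1))$ rounds after its phase-$j$ shift, because of the counting of pebble visits before {\tt ShiftPebble}. Within $2(n-1)$ steps after round $t$, the basic walk of $A(u)$ traverses the whole tree, so it visits the node holding the shifted pebble.

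\textbf{Case 2: $A(u)$ is the earlier agent.} The other pebble is shifted in round $t+\delta<t+2(n-1)$. After that round, $A(u)$ needs at most another $2(n-1)$ steps to complete a full tour of the tree, during which it visits the new position of the other pebble. This gives the total bound $\delta+2(n-1)<4(n-1)$.

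We also need $A(u)$ to still be in phase $j$ at that moment. Phase $j$ of $A(u)$ lasts, after round $t$, until $2(2\deg(u'_j)+1)$ pebble visits. By the counting argument at the start of this subsection, a full traversal contains at least $2\deg+1$ pebble visits, so this covers at least two full traversals, i.e.\ $4(n-1)$ steps. The pebble of $A(v)$ is not shifted again within this window, since its next shift happens in phase $j+1$, which starts after the end of phase $j$.

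The main obstacle is making the step counts in rounds match the counts in pebble visits. Step counts in rounds and counts in pebble visits coincide only if the pebble positions stay fixed during each traversal. The pebble of the other agent moves exactly once in this window, in round $t+\delta$ (or $t-\delta$). That shift can change the number of pebble visits in the traversal containing it. I would handle this by arguing that each of the two traversals following round $t$ is preceded by at most one shift, so it still contains at least $2\deg+1$ visits, or equivalently that $2(2\deg+1)$ visits cannot be exhausted in fewer than $4(n-1)$ steps. This is the same bookkeeping implicitly used in Lemma \ref{samenode}.
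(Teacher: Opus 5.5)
Your two-case timing argument is the paper's proof. When $A(u)$ is the later agent, the other pebble was moved in round $t-\delta$, and one tour of {\tt BW}$(u)$ after round $t$ reaches it. When $A(u)$ is the earlier agent, the other pebble moves in round $t+\delta<t+2(n-1)$, and a full tour begun after that round reaches its new node within $\delta+2(n-1)<4(n-1)$ steps. The paper phrases this as the second tour after round $t$, which starts after round $t+2(n-1)>t+\delta$. Your check that the other pebble does not move again inside this window is a small addition the paper omits. The paper, however, never derives from pebble counts that $A(u)$ is still in phase $j$ at that moment. It just uses the timing of Lemma \ref{samenode}, in which the part of a phase after {\tt ShiftPebble} lasts $2\cdot 2(n-1)$ rounds.

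The paragraph in which you do try to derive this has the inequality reversed. Suppose every full traversal contained \emph{at least} $2\deg+1$ pebble visits. Then $2(2\deg+1)$ visits would be accumulated by the end of the second traversal, so the phase would last \emph{at most} two traversals after the shift. To show that it lasts at least $4(n-1)$ steps, you need an \emph{upper} bound on the number of pebble visits. For the same reason, ``each traversal still contains at least $2\deg+1$ visits'' is not equivalent to ``$2(2\deg+1)$ visits cannot be exhausted in fewer than $4(n-1)$ steps''. The two statements go in opposite directions, and only the second is what you need.

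That upper bound is also exactly what the other agent's shift can destroy. When $A(u)$ is the earlier agent, in the tour containing round $t+\delta$ it may see the other pebble at its old node before that round and again at its new node afterwards. So that tour can contain more than $2\deg+1$ pebble visits, and the count may run out early. Such double counting can only occur once $A(u)$ has already met the other pebble at its new node. A repair should therefore bound the count only up to that first encounter. Alternatively, simply adopt the paper's fixed phase timing, as its proof does.
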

\begin{proof}
Let $t_1$ and $t'_1=t_1+\delta$ denote the rounds in which the agents start phase $j\geq 1$. First, assume that $A(u)$ is the earlier agent. From Lemma~\ref{samenode}, it follows that at the beginning of phase $j$, $A(u)$ and its pebble (resp. $A(v)$ and its pebble) are located at node $u_j$ (resp. node $v_j$). 
Let $A(u)$ (resp. $A(v)$) shift its pebble from node $u_j$ to node $u'_j$ (resp. from node $v_j$ to node $v'_j$) in round $t_2$ (resp. round $t'_2$). We have $t_2=t_1+2\cdot 2(n-1)+1$, and $t'_2=t_2+\delta$. Suppose $A(u)$ visits the node $v_j$ for the last time within $4(n-1)$ steps of {\tt BW}$(u)$ after round $t_2$ in some round $t$. We have $t>t_2+2(n-1)$ because, starting from round $t_2$, after $2(n-1)$ steps of {\tt BW}$(u)$, $A(u)$ will return to $u'_j\neq v_j$. 
Assume that $t\leq t'_2$. We have $t\leq t'_2=t_2+\delta<t_2+2(n-1)$ which contradicts the inequality $t>t_2+2(n-1)$. Thus, $t>t'_2$. This concludes the proof in the case when $A(u)$ is the earlier agent.
Next, assume that $A(u)$ is the later agent. $A(v)$ shifts its token in round $t_2$, and $A(u)$ shifts its pebble in round $t'_2=t_2+\delta$. Thus, at $A(u)$'s first visit of the other agent's pebble after round $t'_2$ in phase $j$, this pebble has already been shifted in round $t_2$ by the other agent. This first visit occurs within $2(n-1)$ steps of the basic walk of $A(u)$ in phase $j$ after round $t'_2=t$.
\end{proof}

The following lemma concerns the case in which the degrees of the nodes visited in step $j$ of the agents' respective basic walks differ. 
\begin{lemma}\label{diff_degree}
If the degrees of the nodes visited by the agents in step $j$ of their respective basic walks differ, then the agents meet by the end of phase $j\geq 1$.
\end{lemma}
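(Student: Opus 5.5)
We must show: if $deg(u'_j)\neq deg(v'_j)$, where $u'_j$ and $v'_j$ are the nodes visited in step $j$ of {\tt BW}$(u)$ and {\tt BW}$(v)$, then the agents meet by the end of phase $j$. The plan is to follow the second part of phase $j$, starting at the round $t_1$ when each agent shifts its pebble. We use Lemma \ref{samenode} to locate the pebbles and Lemma \ref{visit} to guarantee that each agent sees the other pebble in its shifted position. The conclusion is that one agent parks at the higher-degree pebble and the other walks to it.

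First, fix the pebble positions during the relevant window. By Lemma \ref{samenode} (applied at phase $j-1$, or the preliminary phase for $j=1$), at the start of phase $j$ the pebbles are at $u_j,v_j$. By the timing computed in its proof, the later agent shifts its pebble (to $v'_j$, say) before the earlier agent ends phase $j$. After the shift, neither pebble moves again until the end of the earlier agent's phase $j$, because the next shift occurs only after another $4(n-1)+1$ rounds of phase $j+1$, and $\delta<2(n-1)$. Hence, from round $\max(t_1,t_1')$ until both agents finish phase $j$, the only full nodes are $u'_j$ and $v'_j$.

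Second, we show that each agent detects the degree difference. After its own shift, $A(u)$ keeps walking {\tt BW}$(u)$. By Lemma \ref{visit}, within $4(n-1)$ steps it visits $v'_j$ after the other pebble has been placed there. Before that, it may have visited the other pebble only at its old position $v_j$. If $deg(v_j)\neq deg(u'_j)$, this would trigger a premature decision, which is the main obstacle. To handle it, we note that the walk of $4(n-1)$ steps contains $2\cdot(2deg(u'_j)+1)$ pebble visits only if it is long enough to include a full traversal after the other agent's shift. We then argue that a premature stop at the old position $v_j$ is harmless. If $A(u)$ waits at $v_j$ because $deg(v_j)>deg(u'_j)$, then $A(v)$, still in the first part of phase $j$ or just shifting, returns to $v_j$ to pick up its pebble and meets $A(u)$ there. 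Because the agents are symmetric, the same reasoning covers $A(v)$. Making this case analysis airtight, including the case $deg(v_j)<deg(u'_j)$, where $A(u)$ walks on and later finds $u'_j$ or $v'_j$, is the delicate step. I would first try to show directly that the earlier agent cannot reach $v_j$ after its shift and before $A(v)$'s shift, since $\delta<2(n-1)$ and $v_j\neq u'_j$ is reached within one tour.

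Third, we conclude. Suppose without loss of generality that $deg(u'_j)<deg(v'_j)$. Then $A(u)$ first sees a full node of different degree at $v'_j$ and waits there forever. $A(v)$ sees $u'_j$, which has the smaller degree, and so continues {\tt BW}$(v)$ until it reaches a full node of degree $deg(v'_j)$. The only such node is $v'_j$, because the full nodes are $u'_j$ and $v'_j$. Within at most one more tour of $2(n-1)$ steps it arrives at $v'_j$, where $A(u)$ is waiting, and rendezvous occurs. The step bounds ($4(n-1)$ for detection plus $2(n-1)$ for the final walk) fit within the $2\cdot(2deg(u'_j)+1)$ pebble-visit budget of the phase, since each tour contains at least $deg(u'_j)+deg(v'_j)$ pebble visits. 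Hence this all happens by the end of phase $j$.
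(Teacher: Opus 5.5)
Your plan follows the paper's own route. The paper's proof of this lemma is essentially your third paragraph: Lemma~\ref{visit} gives a visit of $v'_j$ after the other agent's shift. If $deg(u'_j)<deg(v'_j)$, then $A(u)$ waits at $v'_j$ and $A(v)$ comes back to it; otherwise $A(u)$ parks at its own pebble and $A(v)$ comes there. The paper never considers the early visit to the not-yet-shifted pebble at $v_j$ that you raise, so the step you leave open is not supplied there either.

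Your treatment of the sub-case $deg(v_j)>deg(u'_j)$ is correct: $A(v)$ must return to $v_j$ to execute its shift, and finds $A(u)$ there. The other sub-case stays open, and the shortcut you suggest is false. The bound $\delta<2(n-1)$ says nothing about how soon $v_j$ follows $u'_j$ on the tour. Lemma~\ref{visit} only says that $A(u)$ eventually sees the shifted pebble, not that it never sees the old one. So the claim in your last paragraph that $A(u)$ ``first sees a full node of different degree at $v'_j$'' is unjustified.

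This is a real failure of the algorithm, not a missing estimate. Suppose $deg(v_j)<deg(u'_j)<deg(v'_j)$. Then $A(u)$, comparing with the stale pebble, concludes that its own pebble is the larger one and parks at $u'_j$. $A(v)$, comparing correctly, also concludes that its own pebble is the larger one and parks at $v'_j$. Nothing forces $A(u)$ to be at $u'_j$ when $A(v)$ last passes there.

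Concretely, the counterexample is as follows.
\begin{itemize}
\item \emph{Tree.} Take the path $U_3U_2U_1\,u\,a\,b\,b'\,a'\,v\,v_1\cdots v_6$, with an extra leaf $x$ at $b$ and extra leaves $y_1,y_2$ at $b'$.
\item \emph{Ports.} $u\colon 0{\to}a,\,1{\to}U_1$; $a\colon 0{\to}u,\,1{\to}b$; $b\colon 0{\to}a,\,1{\to}b',\,2{\to}x$; $b'\colon 0{\to}a',\,1{\to}y_1,\,2{\to}y_2,\,3{\to}b$; $a'\colon 0{\to}v,\,1{\to}b'$; $v\colon 0{\to}a',\,1{\to}v_1$. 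On the two pendant paths, port $0$ points back towards $u$ (resp.\ $v$).
\item \emph{The accident.} Steps $0,1,2$ of the basic walks visit $u,a,b$ and $v,a',b'$ with equal entry ports. Hence $j=2$, with $deg(a')=2<deg(b)=3<deg(b')=4$.
\item \emph{Setup.} The tour has length $34$, and {\tt BW}$(v)$ starts $17$ positions after {\tt BW}$(u)$. Let $A(u)$ start $\delta=4$ rounds earlier.
\item \emph{Before phase~2.} In the timing model of Lemma~\ref{samenode}, both agents run the same Euler tour with $A(v)$ $13$ positions ahead. Since $13$ is odd, they never share a node while walking, and $A(u)$ does not reach $v$ before $A(v)$ starts. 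All degree and entry-port comparisons before the second part of phase~2 are ties, and the ``single non-leaf node'' rule never fires.
\item \emph{Phase~2, $A(u)$.} After its shift, $A(u)$ reaches $a'$ two rounds after $b$, one round before $A(v)$ picks that pebble up. Since $2<3$, it heads for a degree-$3$ pebble and parks at $b$ twenty rounds later.
\item \emph{Phase~2, $A(v)$.} $A(v)$ shifts to $b'$ and meets the pebble at $b$ five rounds later. Since $3<4$, it walks around $x$, $a$, $u$ and the $U$-path back to $b'$. It arrives in the very round in which $A(u)$ arrives at $b$; the two agents cross the edge $bb'$.
\end{itemize}
Both then wait forever at different nodes. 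Yet the instance is feasible: walk {\tt BW} and stop at the unique node of degree $4$.

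So the gap cannot be closed by a better argument for the algorithm as written. The algorithm itself would have to prevent degree comparisons against a pebble that may not yet have been shifted.
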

\begin{proof}
Let $j$ be the smallest positive integer such that the degrees of the nodes visited by the agents in step $j$ of their respective basic walks are different. Let $t_1$ and $t'_1=t_1+\delta$ denote the rounds in which the agents start phase $j$. Without loss of generality, assume that $A(u)$ is the earlier agent. Let $A(u)$ (resp. $A(v)$) shift its pebble from node $u_j$ to node $u'_j$ (resp. from node $v_j$ to node $v'_j$) in round $t_2$ (resp. round $t'_2$). Also, the nodes where pebbles are located in rounds $t_2+1$ and $t'_2+1$ are the nodes visited by agents $A(u)$ and $A(v)$ in step $j$ of their respective basic walks. From Lemma~\ref{visit}, it follows that within $4(n-1)$ steps of {\tt BW}$(u)$ after round $t_2+1$, $A(u)$ will visit the node $v'_j$. If $deg(u'_j)<deg(v'_j)$, then $A(u)$ waits at node $v'_j$ forever, and $A(v)$ will return to $v'_j$ where they will meet. If $deg(u'_j)>deg(v'_j)$, $A(u)$ continues {\tt BW}$(u)$ until visiting a node with a pebble  $(deg(u'_j)+deg(v'_j)+1)$ times and waits there forever, while $A(v)$ will reach this node to complete rendezvous.
\end{proof}

The following two lemmas concern the case in which the degrees of the nodes visited in step $j$ of the agents' respective basic walks are equal. 

\begin{lemma}\label{eq_degree}
Assume that the degrees of nodes visited by the agents in step $j$ of their respective basic walks are equal.
Then, by the end of phase $j\geq 1$, the agents compare the entry ports to the nodes they visited in step $j$ of their respective basic walks. 
\end{lemma}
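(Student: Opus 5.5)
The plan is to follow the timing analysis of Lemma~\ref{samenode} and Lemma~\ref{visit}. The goal is to show that in phase $j$ each agent reads the entry port $q$ of its own step-$j$ edge, and also the entry port $r$ of the other agent's step-$j$ edge, read at a moment when the other pebble still sits at the node visited in step $j-1$ of the other agent's basic walk.

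Assume without loss of generality that $A(u)$ is the earlier agent, with delay $\delta<2(n-1)$ (larger delays give rendezvous in the preliminary phase). By Lemma~\ref{samenode}, at the start of phase $j$ (rounds $t_1$ and $t_1+\delta$) the pebbles are at $u_{j-1}$ and $v_{j-1}$, the nodes visited in step $j-1$ of {\tt BW}$(u)$ and {\tt BW}$(v)$.

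\textbf{Step 1: reading the own port.} Each agent starts phase $j$ at its pebble and continues {\tt BW}, so its first move takes the step-$j$ port $p$ out of its own pebble node. It therefore records $q$ (resp.\ $q'$), the entry port of its own step-$j$ edge, immediately.

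\textbf{Step 2: reading the other agent's port.} Consider the continuation of {\tt BW}$(u)$ during the first part of the phase, which lasts $2\cdot 2(n-1)+1$ rounds before {\tt ShiftPebble}. This covers a full traversal, so $A(u)$ passes through $v_{j-1}$ and, in particular, takes port $p$ from it. The node $y$ in the algorithm is the first pebble node after $u_{j-1}$ at which $A(u)$ exits by port $p$. Since the degrees agree at step $j-1$ (otherwise Lemma~\ref{diff_degree} has already produced a meeting), this exit follows the same edge as step $j$ of {\tt BW}$(v)$, so the entry port read there is exactly $r=q'$.

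\textbf{Step 3: timing, the main obstacle.} We must check that when $A(u)$ exits $v_{j-1}$ by port $p$, the pebble of $A(v)$ has not yet been shifted. $A(v)$ shifts in round $t_1+\delta+4(n-1)+1$. $A(u)$ visits every node of the tree within its first $2(n-1)$ steps after $t_1$, hence strictly before that shift round. Symmetrically, for the later agent $A(v)$, the pebble of $A(u)$ is shifted in round $t_1+4(n-1)+1$, while $A(v)$ reaches $u_{j-1}$ by round $t_1+\delta+2(n-1)<t_1+4(n-1)+1$. So it too sees the unshifted pebble.

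The delicate point is that "the next pebble node exited by port $p$" might be the agent's own pebble revisited, or a pebble node where a different exit port is used. I would handle this as follows:
\begin{itemize}
\item Port $p$ is exited from a pebble node only at the step immediately after the agent's own start, or at the step of {\tt BW} corresponding to the other agent's step $j$. This holds because in a tree {\tt BW} traverses each directed edge once per period.
\item The first occurrence after leaving the own node is therefore at the other pebble, within one period of $2(n-1)$ steps.
\end{itemize}
Once both agents hold $(q,r)$ with $r$ equal to the other agent's $q$, the comparison happens before the shift. This establishes the lemma by the end of phase $j$; the remaining branches of the algorithm ($q<r$, $q>r$, $q=r$) are analysed in the next lemma.
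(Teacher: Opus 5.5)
Your proposal is correct and follows the paper's route: the paper's proof is exactly your Step~3 timing comparison. Each agent reads the other's step-$j$ entry port within the first $2(n-1)$ steps of its phase, strictly before the other agent's {\tt ShiftPebble} (using $\delta<2(n-1)$). Your identification of $y$ as the other agent's pebble node is extra detail the paper leaves implicit. One small fix there: for the exit by port $p$ at $v_{j-1}$ to be the step-$j$ edge of {\tt BW}$(v)$, you need equal entry ports at step $j-1$ (otherwise Lemma~\ref{eq_meet} already produced a meeting), not merely equal degrees.
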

\begin{proof}
Let $t_1$ and $t'_1=t_1+\delta$ denote the rounds in which the agents start phase $j\geq 1$.
In these rounds, the agents record the entry port numbers to nodes visited by each of them in step $j$ of their basic walks.
 Without loss of generality, assume that $A(u)$ is the earlier agent. Let $u_j$ (resp. $v_j$) denote the node where the pebble of $A(u)$ (resp. $A(v)$) is located in round $t_1$ (resp. round $t'_1$). Let $u'_j$ (resp. $v'_j$) denote the node reached from node $u_j$ (resp. node $v_j$) by following the basic walk. Note that $u'_j$ and $v'_j$ are the nodes visited in step $j$ of the agents respective basic walks. We have $deg(u'_j)= deg(v'_j)$.

First, consider the round $t_2$ in which agent $A(u)$ checks the entry port to node $v'_j$.  Let $T_1=t_1+1+t_2$. $A(v)$ will shift its pebble in round $T_2=t'_1+1+4\cdot 2(n-1)$. We have $T_1\leq t_1+1+ 2(n-1)<t'_1+1+4\cdot 2(n-1)=T_2$. Thus, $A(u)$ correctly compares the entry ports to nodes visited by the agents in step $j$ of their respective basic walks. 
 
 Next, consider the round $t_3$ in which agent $A(v)$ checks the entry port to node $u'_j$. Let $T_3=t'_1+1+t_3$. $A(u)$ will shift its pebble in round $T_4=t_1+1+4\cdot 2(n-1)$. We have $T_3=t'_1+1+t_3\leq t'_1+1+ 2(n-1)=t_1+\delta+1+2(n-1)<t_1+1+2\cdot 2(n-1)=T_4$. Thus, $A(v)$ correctly compares the entry ports to nodes visited by the agents in step $j$ of their respective basic walks. 
\end{proof}

\begin{lemma}\label{eq_meet}
If the entry ports to the nodes visited by the agents in step $j$ of their respective basic walks are different, then the agents meet by the end of phase $j\geq 1$.
\end{lemma}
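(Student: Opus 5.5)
We argue in the same style as Lemma~\ref{diff_degree}, reducing to the case analysis built into item~1 of phase~$j$. Take the smallest $j$ with $deg(u'_j)=deg(v'_j)$ but differing entry ports; by Lemma~\ref{diff_degree} and minimality we may assume no earlier meeting or accident. Assume without loss of generality that $A(u)$ is the earlier agent, with delay $\delta<2(n-1)$ (otherwise they already met in the preliminary phase). Let $q$ be the entry port recorded by $A(u)$ into $u'_j$ and $q'$ the one recorded by $A(v)$ into $v'_j$, so $q\neq q'$. By Lemma~\ref{samenode}, at the start of phase $j$ the pebbles are at $u_j$ and $v_j$. By Lemma~\ref{eq_degree}, each agent finds the other agent's pebble still at $v_j$ (resp.\ $u_j$) when it looks for the next pebbled node left by port $p$. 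Since the step-$j$ exit ports are the same $p$ (all earlier steps agreed), the port $r$ read by $A(u)$ is exactly $q'$, and the port read by $A(v)$ is $q$.

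\emph{Case split.} Suppose $q<q'$; the other case is symmetric with the roles of the agents swapped. Then $A(u)$ sees $q<r$ and waits forever at the node reached from $v_j$ by port $p$, namely $v'_j$. Meanwhile $A(v)$ sees $q'>q$ and continues its walk. It does \emph{not} wait at the $(2deg(v_j)+1)$-th pebble visit: it shifts its pebble to $v'_j$ in item~2 (or, in a degenerate reading of the pseudocode, walks until it passes $v'_j$). The point to establish is that the agent with the larger port keeps walking {\tt BW}$(v)$ through the whole tree, which it traverses in $2(n-1)$ steps, and so passes through $v'_j$ within phase $j$. There it meets $A(u)$, which is already waiting.

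\emph{Timing.} It remains to check that $A(u)$ has reached $v'_j$ before $A(v)$'s last relevant visit there, or at least that they coincide at some round. Since $A(u)$ stops permanently at $v'_j$, any later passage of $A(v)$ through $v'_j$ suffices. After $A(u)$ stops (within $2(n-1)$ steps of its phase start), $A(v)$ still performs at least one full cycle of {\tt BW}$(v)$, because its phase contains several full cycles, each of length $2(n-1)$. A full cycle visits every node, hence $v'_j$. Symmetrically, if $q>q'$ then $A(v)$ waits at $u'_j$ and $A(u)$ walks through it. The non-leaf waiting rule in item~1 is irrelevant here: it only causes an agent to wait at some node, and that node is still visited by the other agent's continuing cycle, unless both agents wait. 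That cannot happen, since exactly one agent sees $q<r$.

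\emph{Main obstacle.} The hard step is the timing bookkeeping with delay $\delta$. We must confirm that the waiting agent commits before the walking agent has finished all its remaining full traversals in phase $j$, and that the walking agent's pebble shift does not break its walk. I would handle this exactly as in Lemma~\ref{eq_degree}, using $t_1$, $t'_1=t_1+\delta$ and the bounds $\delta<2(n-1)$ and a phase length of $4\cdot2(n-1)+1$.
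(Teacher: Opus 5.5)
\textbf{There is a genuine gap: your proof misreads the algorithm's behaviour for the agent with the larger entry port.}

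You assert that this agent ``does \emph{not} wait at the $(2\deg(v_j)+1)$-th pebble visit'' and instead shifts its pebble in item~2. Item~1 and the pseudocode say the opposite. If $q>r$, the agent continues {\tt BW} only until it reaches a node with a pebble for the $(2\deg(u_j)+1)$-th time, and then waits there forever. The paper's proof uses exactly this branch.

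So in the situation of this lemma \emph{both} agents end up idle, and neither pebble is ever shifted. Your remark that both agents waiting ``cannot happen'' is therefore false. Your setup is fine up to this point: $r$ equals the other agent's entry port, and the smaller-port agent waits at the other's step-$j$ node.

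The misreading breaks your timing step. You claim that after $A(u)$ settles at $v'_j$, $A(v)$ ``still performs at least one full cycle'' because its phase contains several cycles. But $A(u)$ may settle only close to round $t_1+2(n-1)$. Meanwhile $A(v)$ stops for good at its $(2\deg(v_j)+1)$-th pebble visit. That visit can come just a few steps after $A(v)$ finishes its first traversal of phase $j$, around round $t_1+\delta+2(n-1)$. So you cannot appeal to a full traversal visiting every node.

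What you need instead is to locate $v'_j$ exactly on the walking agent's truncated route. Since neither pebble moves, $A(v)$'s first traversal of phase $j$ contains exactly $2\deg(v_j)$ pebble visits and ends at $v_j$. Its next move is again port $p$, into $v'_j$, at round $t_1+\delta+2(n-1)+1$. This is after $A(u)$ has settled there, which happens by round $t_1+2(n-1)$. It is also before $A(v)$'s $(2\deg(v_j)+1)$-th pebble visit, provided the starting node of the phase is not counted as a visit. If it is counted, $A(v)$ stops at $v_j$ at the end of its first traversal and this argument collapses. You must fix that convention explicitly.

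Second, the case $q>q'$ is not ``symmetric''. You have already used the without-loss-of-generality assumption to make $A(u)$ the earlier agent. Exchanging roles therefore makes the \emph{waiting} agent the later one. In that case $A(v)$ may reach $u'_j$ only near round $t_1+\delta+2(n-1)$. By contrast, $A(u)$ re-enters $u'_j$ already at round $t_1+2(n-1)+1$ and then heads for its next pebbled node, where it stops. It is not automatic that the two agents are ever at the same node in the same round. This case needs its own timing argument, which your proof does not provide.
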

\begin{proof}
Let $k$ be the smallest positive integer such that the entry ports to the nodes visited by the agents in step $k$ of their respective basic walks are different. Let $u'_j$ (resp. $v'_j$) denote the nodes visited by the agents in step $j$ of their respective basic walks. Let $q$ (resp. $r$) denote the entry ports of node $u'_j$ (resp. node $v'_j$) in the agents' basic walks.
From Lemma~\ref{eq_degree}, it follows that in phase $j$, agents compare the entry ports to nodes visited in step $j$ of their respective basic walks. If $q<r$, then $A(u)$ waits at node $v'_j$ forever, and $A(v)$ will return to $v'_j$ where they will meet. If $q>r$, $A(u)$ continues {\tt BW}$(u)$ until visiting a node with a pebble for $(deg(u'_j)+deg(v'_j)+1)$ times and waits there forever, while $A(v)$ will reach this node to complete rendezvous.
\end{proof}

We are now ready to prove the main result of this section.

\begin{theorem}
If the instance $(G,u,v)$ is feasible, then agents meet within $\mathcal{O}(n^2)$ rounds from the start of the execution of {\tt RendezvousAlgorithm} by the earlier agent. 
\end{theorem}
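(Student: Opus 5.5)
The proof will assemble the lemmas above, treating the preliminary phase separately and then bounding the number and length of phases.

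\textbf{Preliminary phase and large delay.} First I dispose of the case $\delta\geq 2(n-1)$. Within its first $2(n-1)$ steps of {\tt BW}$(u)$, the earlier agent visits every node, including the base of the dormant agent. So rendezvous happens within $O(n)$ rounds.

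Assume $\delta<2(n-1)$. If $deg(u)\neq deg(v)$, the preliminary phase yields rendezvous. The agent of smaller base degree continues until it reaches a pebbled node of its own degree. The agent of larger degree either waits at the other base or reaches a pebbled node of its own degree. In either case both agents end up at the base of larger degree. Each agent's walk here lasts at most a constant number of full traversals, i.e.\ $O(n)$ rounds, because pebbles sit at bases throughout and each traversal meets pebbled nodes $deg(u)+deg(v)$ times. Otherwise, the preliminary phase ends after $O(n)$ rounds with both pebbles still at the bases.

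\textbf{Existence of an accident.} Since $(G,u,v)$ is feasible, Lemma~\ref{lem1} shows it is not automorphic. Lemma~\ref{lem2} then gives a step $i$ with an accident: either the degrees of the nodes visited in step $i$ of {\tt BW}$(u)$ and {\tt BW}$(v)$ differ, or the degrees are equal and the entry ports differ.

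The key point is to take $i$ minimal and to bound it by $2(n-1)$. Both basic walks are periodic with period $2(n-1)$. The proof of Lemma~\ref{lem2} actually shows more: if conditions $C_1,\dots,C_{2(n-1)}$ all hold, then $f$ is already a port-preserving automorphism. This is because those steps traverse every edge in both directions. Hence $i\leq 2(n-1)$.

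\textbf{Phases up to the accident.} By Lemma~\ref{samenode}, for every $j<i$ the pebbles sit at the step-$j$ nodes at phase boundaries. No accident occurs in these phases, so the tests in the phase see equal degrees and equal ports. The agents therefore proceed to phase $j+1$ without waiting forever.

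This is the main obstacle: excluding spurious "wait forever" decisions when there is no accident. I would argue that every pebbled node seen after the shift is one of the two current pebble positions, which have equal degrees. This uses Lemma~\ref{visit} and the timing in Lemma~\ref{samenode}. The clause about non-leaf nodes between consecutive pebble visits would also need checking, only to confirm that it never triggers wrongly when no accident has occurred.

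\textbf{The accident phase and the round count.} In phase $i$, Lemma~\ref{diff_degree} (different degrees) or Lemmas~\ref{eq_degree} and \ref{eq_meet} (equal degrees, different ports) give rendezvous by the end of phase $i$.

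Each phase consists of a constant number of traversals of length $2(n-1)$ (about $8(n-1)+O(1)$ rounds), so it lasts $O(n)$ rounds. The waiting agent is reached by the other one within one more traversal, i.e.\ $O(n)$ rounds. Since $i\leq 2(n-1)$, the total is
\[
O(n)+i\cdot O(n)=O(n^2)
\]
rounds from the start of the earlier agent, as claimed.
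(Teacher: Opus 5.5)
Your skeleton is the paper's own proof. Lemmas \ref{lem1} and \ref{lem2} give an accident at some step $i\le 2(n-1)$. Lemmas \ref{diff_degree}, \ref{eq_degree} and \ref{eq_meet} give rendezvous by the end of phase $i$, and $O(n)$ rounds per phase yield $O(n^2)$. You treat $\delta\ge 2(n-1)$ and the bound $i\le 2(n-1)$ more explicitly than the paper does. The paper simply asserts the bound and says nothing about the phases preceding the accident.

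The step you add for those phases, however, would fail as planned. Lemma \ref{visit} only guarantees that \emph{some} visit of the other pebble after its shift occurs. It does not say that every pebbled node the earlier agent sees after its own shift (round $t_2$) is a current pebble position.

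Before the later agent shifts in round $t_2+\delta$, the earlier agent can reach $v_j$ while the later agent's pebble is still there. This happens, e.g., when the later agent is inside a branch hanging from $v_j$ and returns to $v_j$ only a few rounds later. Absence of an accident at step $j-1$ gives only $deg(v_j)=deg(u_j)$, which may differ from $deg(u'_j)$. So the earlier agent can commit to waiting forever in an accident-free phase. My own check, under the timing of Lemmas \ref{samenode} and \ref{visit}, found this in an $11$-node tree with delay $4$: the accident is at step $3$ and the spurious decision is taken in phase $2$. Hence ``the agents proceed to phase $j+1$ without waiting forever'' is false in general.

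What you need instead is that such a decision only produces an earlier meeting.
\begin{itemize}
\item If the earlier agent waits at $v_j$, the later agent comes there before round $t_2+\delta$ to pick up its pebble.
\item Otherwise the earlier agent waits at a pebbled node. After its own shift, the later agent sees only the two shifted pebbles, which have equal degree when step $j$ is accident-free. So it keeps executing its basic walk and reaches that node within $2(n-1)$ rounds.
\end{itemize}
Replace your claim by ``in each phase $j<i$ the agents either meet within $O(n)$ rounds or both start phase $j+1$.'' Your count $O(n)+i\cdot O(n)=O(n^2)$ then remains intact.
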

\begin{proof}
During the execution of {\tt RendezvousAlgorithm}, each agent shifts the pebble it dropped at its base in the preliminary phase. The shift is by one step, following the basic walk in each phase $j\geq 1$. In the preliminary phase, each agent executes $2\cdot 2(n-1)$ steps of its basic walk starting from its base. In each phase $j\geq 1$, each agent executes $4\cdot 2(n-1)$ steps of its basic walk, starting from the node visited in step $j-1$ of its basic walk. Since the instance is feasible, it follows from Lemmas \ref{lem1} and \ref{lem2} that
there exists some step $j$ such that either the nodes visited in step $j$ of the basic walks of the agents have different degrees, or they have the same degree but the port numbers by which the agents enter these nodes in step $j$ differ. In the preliminary phase, the agents compare the degrees of bases. 
In each phase $j\geq 1$, the agents compare the entry ports to nodes visited by them in step $j$ of their respective basic walks, and they compare the degrees of the nodes visited in step $j$ of their respective basic walks. During the execution of their respective basic walks in phase $j$, if the agents visit a non-leaf node $\nu$ between two consecutive visits of a node with a pebble then they go to node $\nu$ and stop.

There are two cases. 
If the entry ports to nodes $u_j'$ and $v_j'$ are different then agents meet by the end of phase $j$, in view of Lemma~\ref{eq_meet}. If
$deg(u_j')\neq deg(v_j')$ then agents meet by the end of phase $j$, in view of Lemma \ref{diff_degree}. In both cases, $j\leq 2(n-1)$ because $2(n-1)$ is the number of steps of the basic walk guaranteeing the full traversal of the tree. On the other hand, each phase lasts at most
$8(n-1)$ rounds.
Thus, agents meet within $\mathcal{O}(n^2)$ rounds from the start of the execution of {\tt RendezvousAlgorithm} by the earlier agent.
\end{proof}

Since {\tt RendezvousAlgorithm} requires remembering only a bounded number of bits, we have the following corollary.

\begin{corollary}
There exists an RV-universal DFA for the class of all instances whose underlying graph is a tree, if  each of the agents is initially equipped with a single movable pebble.
\end{corollary}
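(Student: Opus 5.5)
The corollary follows by turning {\tt RendezvousAlgorithm} into one finite automaton that does not depend on the instance, and then invoking the preceding theorem. The plan has two parts. First, I will check that every quantity the algorithm keeps in memory is bounded by a function of the degree bound $\ell$ alone. Then I will conclude universality from the theorem together with Lemma \ref{lem1}.

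For the finite-state part, I go through the pseudocode of Algorithm \ref{algo1} line by line. The basic walk {\tt BW} uses only the current entry port $j$ and the current degree $d$, both in $L$; it exits by port $(j+1)\bmod d$. Procedure {\tt ShiftPebble} needs only the port $p\le \ell$ that the walk prescribes next. Inside a phase the stored values are:
\begin{itemize}
\item the ports $p,q,r$, each at most $\ell$;
\item the degrees $deg(u)$, $deg(u_j)$ and $deg(u'_j)$, each at most $\ell$;
\item counters of visits to pebbled nodes, which stop at $2(2\ell+1)$;
\item a finite-valued flag recording whether a non-leaf node was seen between two consecutive pebble visits;
\item a constant number of control bits: preliminary phase or phase $j$, sub-step 1 or 2, waiting forever.
\end{itemize}
The phase index $j$ is never stored. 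The pebble's current position stands in for it, and Lemma \ref{samenode} uses the phase structure only through the pebble's position. The global counts $2(n-1)$ and $n$ appear only in the analysis, not in the automaton's memory. So the state space is a finite product of sets whose sizes depend on $\ell$ only. The transition function $\delta$ and output function $\lambda$ are then read off the pseudocode, with {\tt ShiftPebble} implemented through the $pick$/$drop$ outputs allowed in the movable-pebble scenario. This gives a fixed DFA ${\cal A}_\ell$ with one movable pebble.

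Universality then follows directly. Take any feasible instance $(G,u,v)$ where $G$ is a tree with maximum degree at most $\ell$, and any starting delay $\delta$. If $\delta\ge 2(n-1)$, the earlier agent's preliminary walk reaches the dormant agent's base, so the agents meet; this is the observation made in the algorithm's description. If $\delta<2(n-1)$, the preceding theorem gives rendezvous within $\mathcal{O}(n^2)$ rounds. That theorem in turn uses Lemma \ref{lem1} (a feasible instance is not automorphic), Lemma \ref{lem2} (an accident occurs within $2(n-1)$ steps), and Lemmas \ref{diff_degree} and \ref{eq_meet}. Since ${\cal A}_\ell$ is the same automaton for every such instance, it is RV-universal for the class of trees.

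The main obstacle is making sure that no step of the algorithm silently depends on $n$. The risky steps are ``continue until visiting a pebbled node $2(2deg(u_j)+1)$ times'' and the detection of a complete traversal. I would verify that both are triggered only by local events (meeting pebbled nodes), counted against bounds of the form $2(2\ell+1)$. The timing guarantees of Lemmas \ref{samenode} and \ref{visit} hold because the agents follow these local rules, not because they count rounds.
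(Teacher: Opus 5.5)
Your proposal is correct and follows the paper's route. The paper derives the corollary in one line, from the preceding theorem plus the remark that {\tt RendezvousAlgorithm} needs only a bounded number of bits; your inventory of stored quantities (ports, degrees, pebble-visit counters capped at $2(2\ell+1)$, constant control flags, no phase index) is a more explicit version of that remark, and your separate handling of delays $\delta\ge 2(n-1)$ comes from the algorithm's description, just as the paper does.
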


\section{Conclusion}

We first observed that, even in the more powerful scenario of movable pebbles, if agents are equipped with any finite number of pebbles, there is no RV-universal DFA for the class of all instances. Hence we restricted attention to instances where the underlying graph is a tree, and proved two contrasting results concerning the existence of RV-universal DFA for trees, depending on whether pebbles are stationary or movable.

However, our initial observation concerning the non-existence of RV-universal DFA for the class of all instances, with any number of movable pebbles, uses a counterexample involving quite strange graphs constructed in \cite{Ro} that cannot be collectively explored by any number of DFA. 
Thus it is natural to ask if the existence of a
universal automaton is refuted only by very strange graphs.
More precisely, how large is the class of instances for which there exists a RV-universal DFA using movable pebbles. We proved that this class  contains at least all instances whose underlying graph is a tree. It is an open question if this class is significantly larger.



\begin{thebibliography}{12}

\bibitem{alpern02b}
S. Alpern and S. Gal,
The theory of search games and rendezvous.
Int. Series in Operations research and Management Science,
Kluwer Academic Publisher, 2002.

%
%


\bibitem{BCGIL}
E. Bampas, J. Czyzowicz, L. Gasieniec, D. Ilcinkas, A. Labourel, Almost optimal asynchronous rendezvous in infinite multidimensional grids,
Proc. 24th International Symposium on Distributed Computing (DISC 2010),  297-311.


%

%


\bibitem{BBDDP}
S. Bouchard, M. Bournat, Y. Dieudonn\'{e}, S. Dubois, F. Petit,
Asynchronous approach in the plane: a deterministic polynomial algorithm, 
Distributed Computing 32 (2019), 317-337.

 


\bibitem{CFPS}
M. Cieliebak, P. Flocchini, G. Prencipe, N. Santoro, 
Distributed computing by mobile robots: Gathering, SIAM J. Comput. 41 (2012), 829-879.


\bibitem{CCGKM}
A. Collins, J. Czyzowicz, L. Gasieniec, A. Kosowski, R. A. Martin,
Synchronous rendezvous for location-aware agents. 
Proc. 25th International Symposium on Distributed Computing (DISC 2011), 447-459.


\bibitem{CKP}
J. Czyzowicz, A. Kosowski, A. Pelc, How to meet when you forget: Log-space rendezvous in arbitrary graphs, Distributed Computing 25 (2012), 165-178. 

\bibitem{CKP1}
Time vs. space trade-offs for rendezvous in trees,
Proc. 24th ACM Symposium on Parallelism in Algorithms and Architectures  (SPAA 2012), 1-10.






%

\bibitem{DP}
B. Das, A. Pelc,
Universal Deterministic Symmetry Breaking Between Anonymous Agents in Networks,
Proc. 38th ACM Symposium on Parallelism in Algorithms and Architectures (SPAA 2026), 503-523.


\bibitem{DFKP}
A. Dessmark, P. Fraigniaud, D. Kowalski, A. Pelc.
Deterministic rendezvous in graphs.
Algorithmica 46 (2006), 69-96.

\bibitem{DPV}
Y. Dieudonn\'{e}, A. Pelc, V. Villain, How to meet asynchronously at polynomial cost, 
 SIAM Journal on Computing 44 (2015), 844-867. 
 

\bibitem{fpsw}
P. Flocchini, G. Prencipe, N. Santoro, P. Widmayer,
Gathering of asynchronous robots with limited visibility, Theoretical Computer Science 337 (2005), 147-168.

\bibitem{FP}
P. Fraigniaud, A. Pelc, Delays induce an exponential memory gap for rendezvous in trees, ACM Transactions on Algorithms 9 (2013), 17:1-17:24. 

\bibitem{GP}
Y. Gao, A. Pelc, Gathering teams of deterministic finite automata on a line, Proc. 28th Conference on Principles of Distributed Computing (OPODIS 2024), 11:1 - 11:17. 


\bibitem{KKSS}
			E. Kranakis, D. Krizanc, N. Santoro and C. Sawchuk, 
			Mobile agent rendezvous in a ring, 
			Proc. 23rd Int. Conference on Distributed Computing Systems
			(ICDCS 2003), 592-599.


\bibitem{MP}
A. Miller, A. Pelc, Fast deterministic rendezvous in labeled lines, Proc. 37th International Symposium on Distributed Computing (DISC 2023), 29:1 - 29:22. 


\bibitem{Pe2}
A. Pelc, Deterministic rendezvous algorithms, in: Distributed Computing by Mobile Entities, P. Flocchini, G. Prencipe, N. Santoro, Eds., Springer 2019, LNCS 11340. 

\bibitem{PY}
A. Pelc, R. Yadav, Using Time to Break Symmetry: Universal Deterministic Anonymous Rendezvous,
Proc. 31st ACM Symposium on Parallelism in Algorithms and Architectures (SPAA 2019) 85-92


%

\bibitem{Ro}
H. Rollik, Automaten in planaren Graphen, Acta Informatica 13 (1980) 287–298 (also in Lecture Notes of
Computer Science, Vol. 67, 1979, pp. 266–275).


\bibitem{TSZ07}
A. Ta-Shma and U. Zwick.
Deterministic rendezvous, treasure hunts and strongly universal exploration sequences,
	ACM Trans. Algorithms 10 (2014), 12:1-12:15.







	
	

\end{thebibliography}
\end{document}